\newtheorem{definition}{Definition}
\newtheorem{theorem}{Theorem}
\newtheorem{assumption}{Assumption}
\newtheorem{lemma}{Lemma}
\newcommand{\set}[1]{ {\mathcal{#1}} }
\newcommand{\expect}[1]{ {\mathbb{E}\left[{#1}\right]} }
\newcommand{\abs}[1]{ {\left|{#1}\right|} }
\newcommand{\norm}[1]{ {\left\lVert{#1}\right\rVert} }
\newcommand{\round}[1]{ {\Round\left[{{#1}}\right]} }
\newcommand{\IntSet}{ {\mathbb{Z}} }
\DeclareMathOperator{\argmax}{argmax}
\DeclareMathOperator{\Argmin}{Argmin}
\DeclareMathOperator{\Round}{round}
\newcommand{\maximize}{ {\text{Maximize}} }
\newcommand{\minimize}{ {\text{Minimize}} }
\newcommand{\subjectto}{ {\text{Subject to}} }
\newcommand{\prts}[1]{ {\left[{#1}\right]} }
\newcommand{\prtr}[1]{ {\left({#1}\right)} }
\newcommand{\prtc}[1]{ {\left\{{#1}\right\}} }
\newcommand{\maxvar}{ {(\text{max})} }
\begin{document}

\title{Throughput-Optimal Load Balancing\\for Intra Datacenter Networks}

\author{
  \IEEEauthorblockN{Sucha Supittayapornpong,~~Michael J. Neely}
  \IEEEauthorblockA{University of Southern California\\
    Email: supittay@usc.edu,~~mjneely@usc.edu}
}

\maketitle

\begin{abstract}
Traffic load-balancing in datacenters alleviates hot spots and improves network utilization.  In this paper, a stable in-network load-balancing algorithm is developed in the setting of software-defined networking.  A control plane configures a data plane over successive intervals of time.  While the MaxWeight algorithm can be applied in this setting and offers certain throughput optimality properties, its \emph{bang-bang} control structure rewards single flows on each interval and prohibits link-capacity sharing.  This paper develops a new algorithm that is throughput-optimal and allows  link-capacity sharing, leading to low queue occupancy.  The algorithm deliberately imitates weighted fair queueing, which provides fairness and graceful interaction with TCP traffic.  Inspired by insights from the analysis, a heuristic improvement is also developed to operate with practical switches and TCP flows.  Simulations from a network simulator shows that the algorithm outperforms the widely-used equal-cost multipath (ECMP) technique.
\end{abstract}

\section{Introduction}
Datacenter networks serve as infrastructure for search engines, social networks, cloud computing, etc.  Due to potentially 
high traffic loads, load-balancing becomes an important solution to improve network utilization and alleviate hot spots \cite{VL2,ScalableTopology,GoogleDC,FBDC}.  A widely-used technique is equal-cost multipath (ECMP), where traffic flows are split equally according to the number of available equal-cost next-hops.  However, ECMP does not take into account actual traffic and is susceptible to asymmetric topology \cite{Niagara,CONGA,WCMP}.  Further, the deployment of ECMP is limited due to its equal-cost constraint \cite{BCube}.

Traffic load-balancing can be implemented using software-defined networking (SDN).  An SDN switch, a network device supporting layer-2 and layer-3 operations in OSI architecture, consists of a data plane and control plane \cite{SDNSwitch}.\footnote{An SDN switch should not be confused with a crossbar switch, which may reside in the data plane of an SDN switch.}   The data plane forwards packets according to given rules and operates at a fast timescale, e.g., $1$ns.  The control plane sets those rules and operates at a much slower timescale, e.g., $1$ms.  Several traffic load-balancing algorithms can be implemented through the control plane reconfiguration.

Existing traffic load-balancing methods for datacenter networks distribute traffic according to network capacity and measured traffic.  Weighted-cost multipath \cite{WCMP} distributes traffic according to path capacity.  Centralized algorithms, such as Hedera \cite{Hedera} and Niagara \cite{Niagara}, take advantage of global traffic information to split traffic at a coarse timescale.  Load-balancing with a finer timescale has been implemented in DeTail \cite{DeTail} and CONGA \cite{CONGA} using the \emph{in-network} technique, where decisions are made at switches inside a network without any central controller.  Conceptually, these approaches attempt to distribute traffic over available network resources.  However, the path-based approach in CONGA limits scalability, and the packet-by-packet dispersion in DeTail needs TCP with out-of-order resilience.  Further, these algorithms do not come with analytical optimality proofs, and it is not clear if they are \emph{throughput optimal}.

An algorithm is throughput optimal if it stably supports any feasible traffic load, so that average backlog is bounded \cite{MaxWeight}.  Specifically, a throughput-optimal algorithm utilizes the entire network capacity and can distribute traffic to any portion of the network to maintain network stability.  MaxWeight \cite{MaxWeight} is a well-known throughput-optimal algorithm and has been studied for packet radio \cite{MaxWeight}, switching \cite{McKeown:Switch}, and inter-datacenter networking \cite{Javidi:DCNet}.  It has been generalized to optimize power allocation \cite{Neely:Power}, throughput \cite{Eryilmaz:Throughput}, etc.  Practical aspects of MaxWeight such as finite buffer capacity and fairness with TCP connections have been studied in \cite{Sucha:FIFO,WirelessFiniteBuff,TCP-Backpressure}.  However, MaxWeight is not suitable for   
in-network load-balancing because it prohibits the sharing of link capacity at the data plane's timescale and thus causes high queue occupancy.  This queue size has a finite average, but the size of the longer timescale makes that average unacceptably large.

\begin{figure}[!t]
  \centering
  \includegraphics[scale=1.0]{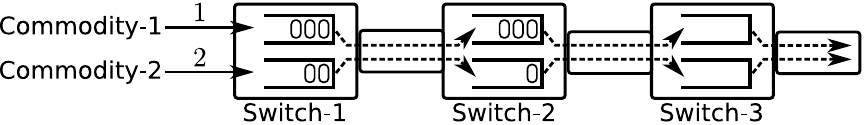}\vspace{-2mm}
  \caption{MaxWeight example: Let $w_{ij}^d$ be the weight of commodity $d$ over the link from switch $i$ to switch $j$.  All weights in this figure are $\prtr{ w_{12}^1, w_{12}^2, w_{23}^1, w_{23}^2 } = \prtr{ 0, 1, 3, 1 }$.}\vspace{-4.5mm}
  \label{fig:ToyExample}
\end{figure}

\begin{figure*}[!t]
  \centering
  \includegraphics[scale=1.0]{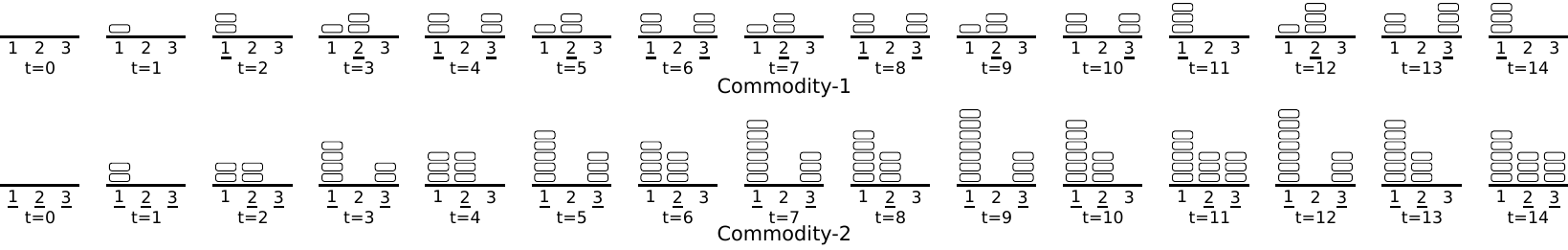}\vspace{-2mm}
  \caption{Timeline of queue occupancy under MaxWeight: a small box represents a packet in a queue.  Switch $i$ is represented by the number $i \in \prtc{1,2,3}$ under the long line.  The short line under the number indicates that the commodity at the numbered switch is served in that particular time slot.  The occupancy pattern repeats after $t=14$, which is similar to the pattern at $t=11$.}\vspace{-4.5mm}
  \label{fig:MaxWeightTimeLine}
\end{figure*}

The MaxWeight algorithm is illustrated by the example in Fig. \ref{fig:ToyExample}.  Two traffic commodities share  three links passing through switches $1,2$ and $3$.  Time is slotted.  The slot size equals the length of the decision update 
interval (the control plane's timescale).  The capacity of each link is $3$ packets per slot.  Every switch has a dedicated queue for each commodity.  In every time slot, MaxWeight calculates, for each link and commodity, a weight equal to the differential backlog between a queue and its next-hop queue.  For that slot, the entire capacity of the link is allocated to the commodity with the maximum non-negative weight, while a commodity with negative weight is ignored.  For example,
 commodity $2$ is served on the link between switches $1$ and $2$ in Fig. \ref{fig:ToyExample}, and 
 commodity $1$ is served on the link between switches  $2$ and $3$.

Let the arrival rates to switch $1$ of commodities $1$ and $2$ be respectively $1$ and $2$ packets per slot.  The timeline of queue evolution is shown in Fig. \ref{fig:MaxWeightTimeLine}.  MaxWeight is effective and always transmits three packets per slot after $t=11$.  This effectiveness requires a sufficient amount of queue backlog. For example, commodity $2$ backlog at switch $1$ is always at least $5$ for times $t  \geq 11$.  This occupancy might be acceptable.  However, if the control plane is reconfigured at a slow timescale relative to the link 
capacity, the queue occupancy can be very high. For example, with a $1$ms update interval, $10$Gbps link speed, and $1$kB packet size, each link can serve 1250 packets per slot (rather than just 3).  This multiplies queue backlog in the 
timeline of Fig. \ref{fig:MaxWeightTimeLine} by a factor $1250/3$,  so the minimum queue backlog of commodity $2$ at switch $1$ is $5\times(1250/3)\approx 2083$ for $t \geq 11$.  Another undesirable property of MaxWeight is that 
queue occupancy  scales linearly with the number hops, as shown in \cite{Scott:LIFO,ShadowQueue}.  In fact, Fig. \ref{fig:MaxWeightTimeLine} is inspired by an example in \cite{Scott:LIFO}.

In practice, the MaxWeight mechanism with a long update interval leads to i) large buffer memory, ii) packet drops, iii) high latency, and iv) burstiness (no capacity sharing during an update interval).  Issues (i)--(iii) can be alleviated partially by the techniques in \cite{Neely:SNO,Sucha:Quadratic,Longbo:Lagrange,ShadowQueue}.  However, issue (iv) resides in the decision making mechanism of MaxWeight and still persists under those techniques.  The situation is worse when issues (ii) and (iv) interact with TCP congestion control, causing slow flow rate and under utilization.  To put it into theoretical perspective, even though MaxWeight solves a network stability problem with $O(1)$ average queue size, the constant factor is too large for a practical system with a long update interval.  Note that an ideal algorithm for the example in Fig. \ref{fig:ToyExample} always serves $1$ and $2$ packets of commodities $1$ and $2$ by sharing the link capacity as shown in Fig. \ref{fig:IdealAlgorithm}.

\begin{figure}
  \centering
  \includegraphics[scale=1.0]{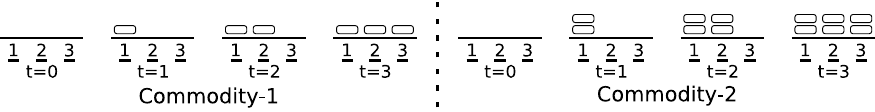}\vspace{-2mm}
  \caption{Timeline of queue occupancy under the ideal algorithm}\vspace{-4.5mm}
  \label{fig:IdealAlgorithm}
\end{figure}

In this paper, a new throughput-optimal algorithm is developed.  The algorithm shares link capacity among commodities during an update interval, resulting in low queue occupancy and low latency.  The key challenge is to design a model and an algorithm that are analyzable, provably optimal,  and practically implementable at the same time.  The algorithm imitates the weighted fair queueing (WFQ) \cite{WFQ,GPS} available in practical switches to provide fairness and low latency among TCP flows in practice.  A general intra data center network may have an exponential number of paths, and our algorithm comes with an optimality proof considering all possible paths using per-commodity queueing which grows linearly with the number of commodities.  This is also a key distinct aspect from the path-based algorithm in \cite{PathBase}.

Section \ref{sec:IdealAlgorithm} develops the throughput-optimal algorithm. 
Inspired by this algorithm, Section \ref{sec:Implementation} presents an enhanced
algorithm that includes heuristics  to cope with practical aspects, including queue information dissemination, queue approximation, and packet reordering issues in TCP.  This heuristic algorithm uses local queue information and local measured traffic to hash TCP flows to next-hop switches and set weights of the weighted fair queueing.  The hash-based mechanism is chosen to reduce packet reordering, which is not possible for DeTail \cite{DeTail}.  Simulation results in Section \ref{sec:Simulations} 
show that both proposed algorithms outperform MaxWeight and ECMP algorithms in ideal simulation and in more realistic simulation with OMNeT++\cite{OMNet}.

\section{System Model and Design}
\label{sec:Model}
The fast timescale of the data plane operates over slotted time $t \in \IntSet_+$, where $\IntSet_+ = \prtc{0, 1, 2, \dotsc }$.  The control plane configures the data plane every $T$ slots, where $T$ is a positive integer.  Thus, reconfigurations occur at times in the set $\set{T} = \prtc{0, T, 2T, \dotsc}$.

\subsection{Topology and Routing}
An intra datacenter network is the interconnection of switches and destinations (such as servers), as shown in Fig. \ref{fig:ModelTopology}.
Traffic designated for a particular destination $d$ is called \emph{commodity $d$} traffic.  Let $\set{S}$ be the set of all switches and $\set{D}$ be the set of all destinations (commodities).
A link between switches $i$ and $j$ is bi-directional with capacity $c_{ij}$ from $i$ to $j$, and capacity $c_{ji}$ in the reverse direction. 
Define $c_{ij} = 0$ if  link $(i,j)$ does not exist or if $i=j$.

\begin{figure}
  \centering
  \includegraphics[scale=1.0]{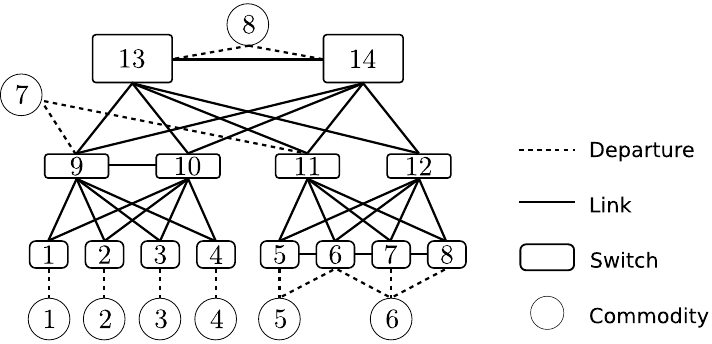}\vspace{-2mm}
  \caption{An example network with $\set{S} = \prtc{1, 2, \dotsc, 14}$ and $\set{D} = \prtc{1, 2, \dotsc, 8}$}\vspace{-4.0mm}
  \label{fig:ModelTopology}
\end{figure}

Each switch must decide where to send its packets next.  Define 
$\set{H}_i^d \subseteq \set{S}$ as the set of next-hop switches available to commodity $d$ packets in switch $i$, 
for all $i \in \set{S}$ and $d \in \set{D}$.  
In practice, these sets can be obtained from manual configuration or from other routing mechanisms.  Define the set of all next-hop switches from switch $i$ as $\set{H}_i = \cup_{d \in \set{D}} \set{H}_i^d$, and the set of previous-hop switches as $\set{P}_i^d = \prtc{ j \in \set{S} : i \in \set{H}_j^d }$ for $i \in \set{S}, d \in \set{D}$.  These sets are illustrated in Fig. \ref{fig:Nexthop}.   Define $\set{D}_{ij} = \prtc{ d \in \set{D} : j \in \set{H}_i^d}$ as the set of all commodities utilizing a link from switch $i$ to switch $j$ for $i,j \in \set{S}$.  Note that this model allows arbitrary path lengths, and can be applied to existing topologies in \cite{VL2,ScalableTopology,GoogleDC,FBDC,BCube}.

\begin{figure}
  \centering
  \includegraphics[scale=1.0]{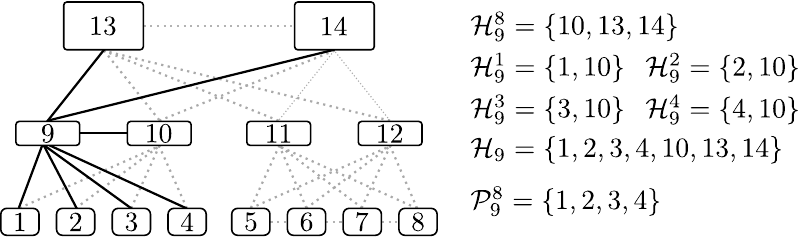}\vspace{-2mm}
  \caption{Example of sets of switches at switch 9.  Note that $\set{P}_9^8$ must not contain $10$ to avoid loops and imposes that $9 \notin \set{H}_{10}^8$.}\vspace{-4.5mm}
  \label{fig:Nexthop}
\end{figure}

\subsection{Traffic}
Switch $i$ receives $a_i^d(t)$ commodity-$d$ packets from external sources at time $t$ (for $i \in \set{S}, d \in \set{D}$).  The external source represents a group of servers or a link connecting to the outside of the network.  For each $i \in \set{S}$, $d \in \set{D}$, the  arrival process $\{a_i^d(t)\}_{t=0}^{\infty}$ is independent and identically distributed (i.i.d.) across time slots.  The i.i.d. assumption is useful for a simple and insightful analysis. 
The resulting algorithm developed under this assumption inspires a heuristic algorithm in Section \ref{sec:Implementation} that does not require i.i.d. arrivals and works gracefully with TCP traffic. 

Recall that $c_{ij}$ is the capacity of the link between switch $i$ and switch $j$, for $i, j \in \set{S}$. 
Let $b_i^d$ be the capacity of the link between switch $i \in \set{S}$ and destination $d \in \set{D}$. Set $b_i^d=0$ if
switch $i$ does not have a direct link
to destination $d$.   Assume arrivals and link capacities are  always bounded by a constant $\delta>0$, so  $0\leq c_{ij} \leq \delta$, 
$0\leq a_i^d(t) \leq \delta$,  $0\leq b_i^d \leq \delta$ for all $i,j\in \set{S}, d \in \set{D}, t \in \IntSet_+$.

\subsection{Decision Variables}

Decision variables are defined for every link connecting switch $i$ to its next-hop switch $j$, for $i \in \set{S}, j \in \set{H}_i$.  Recall that $\set{D}_{ij}$ is a set of commodities using the link.  At configuration time $t \in \set{T}$, the control plane in switch $i$ chooses a \emph{control plane decision variable} $x_{ij}^d(t,T)$ for $d \in \set{D}_{ij}$, which represents a constant transmission rate allocated to commodity $d$ (in units of packets) for the entire $T$-slot interval.  Define $x_{ij}^d(t,T) = 0$ for $d \in \set{D} \backslash \set{D}_{ij}$.  The control plane decisions for  link $(i,j)$ are chosen to satisfy the link capacity constraint: 
$$ \sum_{d \in \mathcal{D}_{ij}}  x_{ij}^d(t,T) \leq T c_{ij}. $$

Once $x_{ij}^d(t,T)$ is determined, no more than $x_{ij}^d(t,T)$  commodity-$d$ packets can be  
transmitted by the data plane during an interval $\prtc{t, \dotsc, t+T-1}$.  For example, the data plane can impose a token bucket mechanism.  Let $x_{ij}^d(t)$ be  \emph{data plane decision variable} that represents 
the transmission rate assigned by the data plane to commodity $d$ on link $(i,j)$ for slot $t$. These are chosen to satisfy
\begin{align}
  x_{ij}^d(t,T) & = \sum_{\tau = t}^{t+T-1} x_{ij}^d(\tau) &&\hspace{-1mm}\text{for}~i\in \set{S}, j \in \set{H}_i, t \in \set{T}   \label{eq:xijdT}\\  
  \sum_{d \in \set{D}_{ij}} x_{ij}^d(t) & \leq c_{ij} &&\hspace{-1mm}\text{for}~i\in \set{S}, j \in \set{H}_i, t \in \IntSet_+. \label{eq:cij} 
\end{align}

\subsection{Queues}

\label{sec:Queue}

Packets are queued at each switch according to their commodity.  Let  $Q_i^d(t)$ be the number of commodity-$d$ 
packets queued at switch $i$ on slot $t$.  The value $Q_i^d(t)$ is also called  the \emph{commodity-$d$ backlog}
and satisfies:
\begin{multline}
  \label{eq:Qid}
  Q_i^d(t+1) \leq \prts{ Q_i^d(t) - \sum_{j \in \set{H}_i^d} x_{ij}^d(t) - b_i^d }_+ +\\\sum_{j \in \set{P}_i^d} x_{ji}^d(t) + a_i^d(t) \quad\text{for}~ i \in \set{S}, d \in \set{D},
\end{multline}
where $\prts{z}_+ = \max[0,z]$.  Note that $\sum_{j \in \set{H}_i^d} x_{ij}^d(t)$ denotes  output transmission rate to next-hop switches, and $\sum_{j \in \set{P}_i^d} x_{ji}^d(t)$ denotes receiving transmission rate from previous-hop switches.  
Inequality \eqref{eq:Qid} is an inequality rather than an equality because the actual amount of new endogenous arrivals
on slot $t$ may be less than $\sum_{j \in \set{P}_i^d} x_{ji}^d(t)$ if previous switches $j$ do not have enough commodity $d$
backlog to fill the assigned transmission rate $x_{ji}^d(t)$. 
It can be shown that the backlog at time $t \in \set{T}$ satisfies for every $i \in \set{S}, d \in \set{D}$:
\begin{multline}
  \label{eq:QidT}
  Q_i^d(t+T) \leq \prts{ Q_i^d(t) - \sum_{j \in \set{H}_i^d} x_{ij}^d(t,T) - T b_i^d }_+ + \\\sum_{j \in \set{P}_i^d} x_{ji}^d(t,T) + \sum_{\tau=t}^{t+T-1} a_i^d(\tau).
\end{multline}

Note that, while a common queue for each commodity is not available in practical switches, it can be heuristically implemented by queues in an SDN switch in Section \ref{sec:Implementation}.

\subsection{Stability and Assumption}
\begin{definition}[Queue Stability  \cite{Neely:SNO}]   
  \label{def:QueueStability}
  A queue with backlog $\prtc{ Z(t) \geq 0 : t \in \IntSet_+ }$ is strongly stable if
\begin{equation*}
  \limsup_{t \rightarrow \infty} \frac{1}{t} \sum_{\tau = 0}^{t - 1} \expect{ Z(\tau) } < \infty.
\end{equation*}
\end{definition}

\begin{definition}[Network Stability  \cite{Neely:SNO}] 
  A network is strongly stable when every queue in the network is strongly stable.
\end{definition}

The arrival and departure rates are assumed to satisfy a standard Slater condition: 
\begin{assumption}[Slater Condition]
  \label{ass:Slater}
  There exists an $\epsilon>0$ and a randomized policy $\prtc{ x_{ij}^{d\ast}(t) }_{i \in \set{S}, d \in \set{D}, j \in \set{H}_i^d}$ with
\begin{multline*}
  \expect{ \sum_{j \in \set{P}_i^d} x_{ji}^{d\ast}(t) + a_i^d(t) - \sum_{j \in \set{H}_i^d} x_{ij}^{d\ast}(t) - b_i^d } < -\epsilon \\\text{for all}~ i \in \set{S}, d \in \set{D}, t \in \IntSet_+,
\end{multline*}
and the randomized policy satisfies the constraint \eqref{eq:cij}.
\end{assumption}

Note that Assumption \ref{ass:Slater} is stated in terms of only the data plane decision variables $x_{ij}^{d*}(t)$ and
the constraint \eqref{eq:cij}.  While the control plane decisions are used in the algorithm and the additional 
constraint \eqref{eq:xijdT} is satisfied by the algorithm, those  are not used in the 
Slater condition. 

\section{Throughput-Optimal Algorithm}
\label{sec:IdealAlgorithm}
\subsection{The Algorithm}
The novel throughput-optimal algorithm runs distributively at every switch.  Let $K$ be a positive real number, and $x_{ij}^d(-T,T) = 0$ for all $i,j \in \set{S}, d \in \set{D}$.  At every reconfiguration time $t \in \set{T}$, switch $i$ executes Algorithm \ref{alg:Main} for each link connecting to a next-hop switch $j$ for $i \in \set{S}, j \in \set{H}_i$.   Recall that $x_{ij}^d(t,T) = 0$ for $d \in \set{D} \backslash \set{D}_{ij}$.  The function $\round{z}$ rounds the real-valued $z$ to its closest integer.
  
\begin{algorithm}
  \begin{algorithmic}
    \STATE{ \texttt{// switch $i$, link to switch $j$, time $t$ //} }\vspace{1mm}
    \STATE{ $y_{ij}^d(t) \leftarrow Q_i^d(t) - Q_j^d(t) + x_{ij}^d(t-T,T) $ for $d \in \set{D}_{ij}$ }
    \STATE{ $k_{ij}(t) \leftarrow \max\prts{ 1, \min\prts{ K, \frac{1}{Tc_{ij}} \sum_{d \in \set{D}_{ij}} \prts{ y_{ij}^d(t) }_+ } }$ }      
    \IF{ $\sum_{d \in \set{D}_{ij}} \round{ \prts{ y_{ij}^d(t) }_+ / k_{ij}(t) } \leq T c_{ij}$ }
    \STATE{ $x_{ij}^d(t,T) \leftarrow \round{ \prts{ y_{ij}^d(t) }_+ / k_{ij}(t) }$ for $d \in \set{D}_{ij}$ }
    \RETURN{ $\prtc{ x_{ij}^d(t,T) }_{d \in \set{D}_{ij}}$ }
    \ELSE
      \RETURN{ result of Algorithm \ref{alg:PacketFilling} }
    \ENDIF
  \end{algorithmic}
  \caption{Throughput-optimal rate allocation}
  \label{alg:Main}  
\end{algorithm}
\begin{algorithm}
  \begin{algorithmic}
    \STATE{ \texttt{// switch $i$, link to switch $j$, time $t$ //} }\vspace{1mm}    
    \STATE{ $y_{ij}^d(t) \leftarrow Q_i^d(t) - Q_j^d(t) + x_{ij}^d(t-T,T) $ for $d \in \set{D}_{ij}$ }    
    \STATE{ $v_{ij}^d \leftarrow 0$ for $d \in \set{D}_{ij}$ }
    \FOR{ $n = 1$ to $T c_{ij}$ }
    \STATE{ $d_n \leftarrow \argmax_{e \in \set{D}_{ij}} \prts{ y_{ij}^e(t) - v_{ij}^e k_{ij}(t) }$ }
    \STATE{ $v_{ij}^{d_n} \leftarrow v_{ij}^{d_n} + 1$ }
    \ENDFOR
    \STATE{ $x_{ij}^d(t,T) \leftarrow v_{ij}^d$ for $d \in \set{D}_{ij}$ }    
    \RETURN{ $\prtc{ x_{ij}^d(t,T) }_{d \in \set{D}_{ij}}$ }
  \end{algorithmic}
  \caption{Packet-filling algorithm (unaccelerated version)}
  \label{alg:PacketFilling}
\end{algorithm}

\subsection{Intuitions}
\label{sec:Intuitions}
Algorithm \ref{alg:Main} solves problem \eqref{eq:MainProblem} with the value of $k_{ij}(t)$ that depends on local queue information and previous decisions.  This $k_{ij}(t)$ is deliberately introduced, just as a solution of problem \eqref{eq:MainProblem} imitates weighted fair queueing, which provides fairness and low latency in practice \cite{WFQ,GPS}.
\begin{align}
  \minimize \quad & \sum_{d \in \set{D}_{ij}} \Bigg\{ x_{ij}^d(t, T) \prts{ Q_j^d(t) - Q_i^d(t) } + \Bigg. \notag \\ &\quad \Bigg. \frac{ k_{ij}(t) }{ 2 } \prts{ x_{ij}^d(t, T) - \frac{ x_{ij}^d(t-T, T) }{ k_{ij}(t) } }^2 \Bigg\} \label{eq:MainProblem} \\
  \subjectto \quad & \sum_{d \in \set{D}_{ij}} x_{ij}^d(t, T) \leq T c_{ij} \notag\\
  & x_{ij}^d(t, T) \in \IntSet_+ \hspace{4.5em} \text{for all}~~ d \in \set{D}_{ij} \notag
\end{align}

In Algorithm \ref{alg:Main}, $y_{ij}^d(t)$ represents a request for transmission rate of commodity $d$.  It also indicates how well the previous rates were allocated.  Under allocation of $x_{ij}^d(t-T,T)$ increases queue backlog $Q_i^d(t)$, which tends to increase the request $y_{ij}^d(t)$.  It is easy to see that, if the queue backlogs $Q_i^d(t)$ and $Q_j^d(t)$ are about the same, then the request is about the same as its previous value.  This behavior is smoother than MaxWeight.

The requests are fulfilled in two situations.  i) When the total requests are roughly within $K T c_{ij}$, i.e., $\sum_{d \in \set{D}_{ij}} \round{ \prts{ y_{ij}^d(t) }_+ / k_{ij}(t) } \leq T c_{ij}$, the requests are fulfilled in WFQ fashion, which can be seen by considering $k_{ij}(t) = \frac{1}{T c_{ij}} \sum_{d \in \set{D}_{ij}} \prts{ y_{ij}^d(t) }_+$ and
\begin{align}
  x_{ij}^d(t, T) & = \round{ \prts{ y_{ij}^d(t) }_+ / k_{ij}(t) } \notag\\
    & = \round{ \frac{ \prts{ y_{ij}^d(t) }_+ }{ \sum_{e \in \set{D}_{ij}} \prts{ y_{ij}^e(t) }_+ } \times T c_{ij} }. \label{eq:WFQIntuition}
\end{align}
ii) The other is an extreme situation for stability analysis, which occurs when a network operates near its capacity, which may not be the case in practice due to TCP congestion control.  This case is solved by Algorithm \ref{alg:PacketFilling}\footnote{Algorithm \ref{alg:PacketFilling} can be accelerated by fulfilling multiple requests per iteration. For example, the requests in Fig. \ref{fig:PacketFilling} can be fulfilled in three iterations.}, as illustrated by Fig. \ref{fig:PacketFilling}.  It is easy to see the fairness introduced by $k_{ij}(t)$.  Without $k_{ij}(t)$, i.e., $k_{ij}(t)$ is always $1$, the allocation in Fig. \ref{fig:PacketFilling} will be $0, 0, 2, 7$ for commodities $1$ to $4$, which may cause a fairness issue with TCP flows \cite{TCP-Backpressure}.

\begin{figure}
  \centering
  \includegraphics[scale=1.0]{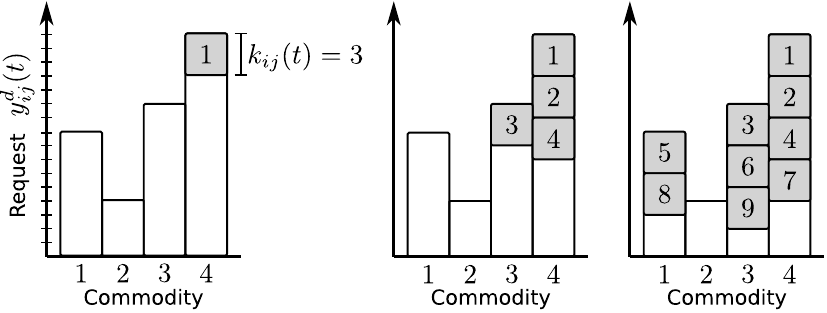}\vspace{-2mm}
  \caption{Packet-filling Algorithm \ref{alg:PacketFilling} iteratively fulfills the requests.  An iteration number is indicated in a gray box.  In this example, $T c_{ij} = 9$ and the first iteration (the plot on the left) allocates rate to commodity 4.  The algorithm allocates $2, 0, 3, 4$ rates to commodities $1, 2, 3,4$.}\vspace{-4.5mm}
  \label{fig:PacketFilling}
\end{figure}

\subsection{Correctness of Algorithm \ref{alg:Main}}
\label{sec:solution}
This subsection shows that  Algorithm \ref{alg:Main} returns an optimal solution of problem \eqref{eq:MainProblem}.  To simplify notation in this section, the time index of variables and constants in problem \eqref{eq:MainProblem} are omitted. Let $z_{ij}^d$ denote $x_{ij}^d(t-T,T)$.
    
When commodity $d$ is allocated an integer service value $v_{ij}^d$, define its contribution to the cost function of problem \eqref{eq:MainProblem} as
\begin{equation*}
  \label{eq:gijd}
  g_{ij}^d( v_{ij}^d ) = v_{ij}^d \prts{ Q_j^d - Q_i^d } + \frac{ k_{ij} }{ 2 } \prts{ v_{ij}^d - \frac{ z_{ij}^d }{ k_{ij} } }^2.
\end{equation*}
The cost difference of getting another service allocation is
\begin{equation}
  \label{eq:diffgijd}
g_{ij}^d( v_{ij}^d + 1 ) - g_{ij}^d( v_{ij}^d ) = - \prts{ Q_i^d - Q_j^d + z_{ij}^d - \frac{ k_{ij} }{ 2 } - k_{ij} v_{ij}^d }.
\end{equation}
Since the cost function in problem \eqref{eq:MainProblem} is minimized, commodity $d$ only accepts a transmission allocation if $g_{ij}^d( v_{ij}^d + 1 ) \leq g_{ij}^d( v_{ij}^d )$.  The cost difference in \eqref{eq:diffgijd} is monototically increasing in $v_{ij}^d$.  Therefore, commodity $d$ receives transmission allocation at most:
\begin{align}
  x_{ij}^{d\maxvar} &= \min \prtc{ v \in \IntSet_+ : g_{ij}^d( v + 1 ) - g_{ij}^d( v ) > 0 } \label{eq:xijdmaxDef}\\
  & = \min \prtc{ v \in \IntSet_+ : \frac{ Q_i^d - Q_j^d + z_{ij}^d }{ k_{ij} } - 0.5 < v } \notag\\
  & = \round{ \prts{ Q_i^d - Q_j^d + z_{ij}^d }_+ / k_{ij} }. \label{eq:xijdmax}
\end{align}

\begin{lemma}
  \label{lem:solution2}
  When $k_{ij} > 0$ and $\sum_{d \in \set{D}_{ij}} x_{ij}^{d\maxvar} \leq T c_{ij}$, the optimal solution of problem \eqref{eq:MainProblem} is $x_{ij}^d = x_{ij}^{d\maxvar}$ for $d \in \set{D}_{ij}$.
  \begin{proof}
    For any $v_{ij}^d \in \prtc{ 0, 1, \dotsc, x_{ij}^{d\maxvar} }, d \in \set{D}_{ij}$, the given implies that $\sum_{d \in \set{D}_{ij}} v_{ij}^d \leq \sum_{d \in \set{D}_{ij}} x_{ij}^{d\maxvar} \leq T c_{ij}$.  So, any chosen $v_{ij}^d$ in $\prtc{ 0, 1, \dotsc, x_{ij}^{d\maxvar} }$ leads to a feasible solution of problem \eqref{eq:MainProblem}.  Note that the objective function of problem \eqref{eq:MainProblem} is separable, $\sum_{d \in \set{D}_{ij}} g_{ij}^d( v_{ij}^d )$, and is minimized.  The definition of $x_{ij}^{d\maxvar}$ in \eqref{eq:xijdmaxDef} implies $g_{ij}^d( v ) > g_{ij}^d( x_{ij}^{d\maxvar} )$ for any $v > x_{ij}^{d\maxvar}$, so any $v > x_{ij}^{d\maxvar}$ is not optimal.  Thus, i) if $x_{ij}^{d\maxvar} = 0$, $x_{ij}^d = 0 = x_{ij}^{d\maxvar}$ minimizes problem \eqref{eq:MainProblem} with respect to commodity $d$.  ii) If $x_{ij}^{d\maxvar} > 0$, the definition of $x_{ij}^{d\maxvar}$ in \eqref{eq:xijdmaxDef} implies $g_{ij}^d( x_{ij}^{d\maxvar} ) \leq g_{ij}^d( v )$ for any $v \in \prtc{ 0, 1, \dotsc, x_{ij}^{d\maxvar} - 1 }$, so $x_{ij}^d = x_{ij}^{d\maxvar}$ minimizes the problem with respect to commodity $d$.
  \end{proof}
\end{lemma}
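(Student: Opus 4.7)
The plan is to exploit the separability of the cost function in problem \eqref{eq:MainProblem}. Since $\sum_{d \in \set{D}_{ij}} g_{ij}^d(v_{ij}^d)$ decouples across commodities and the integrality/nonnegativity constraints $v_{ij}^d \in \IntSet_+$ also decouple, the only remaining coupling between coordinates is the shared link capacity constraint $\sum_{d \in \set{D}_{ij}} v_{ij}^d \leq T c_{ij}$. If one shows that each coordinate's unconstrained integer minimizer is individually $x_{ij}^{d\maxvar}$, and that the vector of these minimizers is jointly feasible, then the coordinate-wise minimum must also be a joint minimum.

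First I would verify feasibility of the candidate $(x_{ij}^{d\maxvar})_{d \in \set{D}_{ij}}$. Each $x_{ij}^{d\maxvar}$ lies in $\IntSet_+$ by the definition \eqref{eq:xijdmaxDef}, and the hypothesis of the lemma gives $\sum_{d \in \set{D}_{ij}} x_{ij}^{d\maxvar} \leq T c_{ij}$. So the candidate satisfies every constraint of problem \eqref{eq:MainProblem}.

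Next I would establish per-coordinate optimality. The forward difference $g_{ij}^d(v+1) - g_{ij}^d(v)$ computed in \eqref{eq:diffgijd} is affine in $v$ with slope $k_{ij} > 0$, hence strictly increasing. Together with the definition of $x_{ij}^{d\maxvar}$ as the smallest $v \in \IntSet_+$ at which the forward difference first becomes strictly positive, this yields a dichotomy: for $v < x_{ij}^{d\maxvar}$ the forward difference is non-positive, so $g_{ij}^d$ is non-increasing on $\{0,1,\ldots,x_{ij}^{d\maxvar}\}$; for $v \geq x_{ij}^{d\maxvar}$ it is strictly positive, so $g_{ij}^d$ is strictly increasing on $\{x_{ij}^{d\maxvar}, x_{ij}^{d\maxvar}+1,\ldots\}$. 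Therefore $x_{ij}^{d\maxvar}$ is a minimizer of $g_{ij}^d$ over $\IntSet_+$, and summing over $d \in \set{D}_{ij}$ gives $\sum_d g_{ij}^d(x_{ij}^{d\maxvar}) \leq \sum_d g_{ij}^d(v_{ij}^d)$ for any feasible $(v_{ij}^d)$, which together with the feasibility step above proves optimality.

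The only subtlety, and the place where I expect to have to be careful, is the boundary case $x_{ij}^{d\maxvar} = 0$: then the non-increasing half of the dichotomy is vacuous, so one must use the strict-increase half together with the constraint $v_{ij}^d \geq 0$ to conclude that $v_{ij}^d = 0$ is optimal for that commodity. This is a one-line case split rather than a substantive obstacle, but it should be handled explicitly so that no commodity is given a negative or fractional allocation.
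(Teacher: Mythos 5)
Your proposal is correct and follows essentially the same route as the paper: both exploit separability of the objective, verify joint feasibility of $(x_{ij}^{d\maxvar})_d$ from the hypothesis, and use the strict monotonicity of the forward difference $g_{ij}^d(v+1)-g_{ij}^d(v)$ (affine in $v$ with slope $k_{ij}>0$) to conclude that $x_{ij}^{d\maxvar}$ is the per-coordinate minimizer over $\IntSet_+$, with the same $x_{ij}^{d\maxvar}=0$ case split. The only cosmetic difference is that you argue $x_{ij}^{d\maxvar}$ minimizes $g_{ij}^d$ over all of $\IntSet_+$ directly, whereas the paper first discards $v>x_{ij}^{d\maxvar}$ and then minimizes over $\{0,\dots,x_{ij}^{d\maxvar}\}$; these are the same argument.
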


Lemma \ref{lem:solution2} implies that Algorithm \ref{alg:Main} solves problem \eqref{eq:MainProblem} when the first (if-)condition is met.  The other case of Algorithm \ref{alg:Main} can be shown by the following property.  When commodity $d$ is allocated $v_{ij}^d$ transmission rate, define the unfulfilled level of commodity-$d$'s request as
\begin{equation}
  \label{eq:lijd}
  l_{ij}^d( v_{ij}^d ) = Q_i^d - Q_j^d + z_{ij}^d - k_{ij} v_{ij}^d.
\end{equation}

\begin{lemma}
  \label{lem:level}
  When $k_{ij} > 0$, for any commodities $d, e \in \set{D}_{ij}$ whose allocated rates are respectively $v_{ij}^d$ and $v_{ij}^e$, the following holds:\\
  i) If $l_{ij}^d ( v_{ij}^d ) = l_{ij}^e ( v_{ij}^e )$, then
  \begin{equation*}
    g_{ij}^d( v_{ij}^d + 1 ) + g_{ij}^e( v_{ij}^e ) = g_{ij}^d( v_{ij}^d ) + g_{ij}^e( v_{ij}^e + 1 ).
  \end{equation*}
  ii) If $l_{ij}^d ( v_{ij}^d ) > l_{ij}^e ( v_{ij}^e )$, then
  \begin{equation*}
    g_{ij}^d( v_{ij}^d + 1 ) + g_{ij}^e( v_{ij}^e ) < g_{ij}^d( v_{ij}^d ) + g_{ij}^e( v_{ij}^e + 1 ).
  \end{equation*}
  \begin{proof}
    It holds from equations \eqref{eq:diffgijd} and \eqref{eq:lijd} that
    \begin{align}
      & g_{ij}^d( v_{ij}^d + 1 ) - g_{ij}^d ( v_{ij}^d ) - g_{ij}^e( v_{ij}^e + 1 ) + g_{ij}^e( v_{ij}^e ) \notag\\
      & \quad = - \prts{ l_{ij}^d( v_{ij}^d ) - \frac{ k_{ij} }{ 2 } } + \prts{ l_{ij}^e( v_{ij}^e ) - \frac{ k_{ij} }{ 2 } } \notag\\
      & \quad = - l_{ij}^d( v_{ij}^d ) + l_{ij}^e( v_{ij}^e ). \label{eq:gg}
    \end{align}
    In case (i), $l_{ij}^d ( v_{ij}^d ) = l_{ij}^e ( v_{ij}^e )$ implies that $g_{ij}^d( v_{ij}^d + 1 ) - g_{ij}^d ( v_{ij}^d ) - g_{ij}^e( v_{ij}^e + 1 ) + g_{ij}^e( v_{ij}^e ) = 0$, which proves (i).  Case (ii) can be proven similarly by substituting $- l_{ij}^d ( v_{ij}^d ) + l_{ij}^e ( v_{ij}^e ) < 0$ into equation \eqref{eq:gg} and rearranging terms.
  \end{proof}
\end{lemma}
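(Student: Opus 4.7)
The plan is to reduce both claims to a single arithmetic identity relating a pairwise difference of the form $[g_{ij}^d(v_{ij}^d+1)-g_{ij}^d(v_{ij}^d)] - [g_{ij}^e(v_{ij}^e+1)-g_{ij}^e(v_{ij}^e)]$ to a difference of unfulfilled levels $l_{ij}^d(v_{ij}^d)-l_{ij}^e(v_{ij}^e)$. Once that identity is in hand, rearranging terms immediately gives the two stated conclusions, so the work is essentially algebraic.

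Concretely, I would first reuse the one-step cost difference \eqref{eq:diffgijd}, which already says $g_{ij}^d(v+1)-g_{ij}^d(v)=-(Q_i^d-Q_j^d+z_{ij}^d-k_{ij}/2-k_{ij}v)$. Comparing this to the definition \eqref{eq:lijd} of $l_{ij}^d(v)$, the right-hand side is exactly $-(l_{ij}^d(v)-k_{ij}/2)$. Thus a single step of either commodity changes its $g$-value by the negative of its current unfulfilled level minus the constant $k_{ij}/2$.

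Next I would subtract the analogous identity for commodity $e$ from the one for commodity $d$. The $k_{ij}/2$ terms cancel and what remains is $-l_{ij}^d(v_{ij}^d)+l_{ij}^e(v_{ij}^e)$, which is precisely the identity \eqref{eq:gg} of interest. This is the only nontrivial algebraic manipulation in the proof and what I would expect to be flagged as the main step; everything else is bookkeeping.

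Finally, cases (i) and (ii) drop out by inspection. If $l_{ij}^d(v_{ij}^d)=l_{ij}^e(v_{ij}^e)$ the paired difference vanishes, so moving the subtracted terms to the other side yields the equality in (i). If $l_{ij}^d(v_{ij}^d)>l_{ij}^e(v_{ij}^e)$ the paired difference is strictly negative, giving the strict inequality in (ii) after the same rearrangement. No obstacle of substance is expected; the one thing to be careful about is book-keeping the signs when moving $g_{ij}^d(v_{ij}^d)$ and $g_{ij}^e(v_{ij}^e+1)$ across the inequality, since the objective in \eqref{eq:MainProblem} is cast as a minimization and it is easy to flip a direction inadvertently.
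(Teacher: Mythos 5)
Your proposal is correct and follows essentially the same route as the paper: you derive the pairwise-difference identity by combining \eqref{eq:diffgijd} with \eqref{eq:lijd}, observe the $k_{ij}/2$ terms cancel to give \eqref{eq:gg}, and then read off both cases by rearrangement. No gaps.
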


Lemma \ref{lem:level} implies that allocating rate to the commodity with the highest unfulfilled level reduces the total objective the most.  Specifically, let $v_{ij}^d$ be the current rate allocation of commodity $d$.  For $d^\ast = \argmax_{d \in \set{D}_{ij}} l_{ij}^d(v_{ij}^d)$, Lemma \ref{lem:level} implies that
\begin{multline*}
  g_{ij}^{d^\ast}( v_{ij}^{d^\ast} + 1 ) + \sum_{d \in \set{D}_{ij}\backslash\prtc{d^\ast}} g_{ij}^d( v_{ij}^d ) \leq \\g_{ij}^{e}( v_{ij}^{e} + 1 ) + \sum_{d \in \set{D}_{ij}\backslash\prtc{e}} g_{ij}^d( v_{ij}^d ) \quad \text{for all}~ e \in \set{D}_{ij}.
\end{multline*}

The above property ensures that the iterative allocation in Algorithm \ref{alg:PacketFilling} greedily optimizes problem \eqref{eq:MainProblem} when event $\sum_{d \in \set{D}_{ij}} \round{ \prts{ y_{ij}^d(t) }_+ / k_{ij}(t) } > T c_{ij}$ occurs.  It can be proven by contradiction that commodity $d$ gets at most $x_{ij}^{d\maxvar}$ rate for all $d \in \set{D}_{ij}$.  Then, the algorithm always allocates the entire transmission rate, as an implication of the event.  Therefore, Algorithm \ref{alg:PacketFilling} returns an optimal solution of problem \eqref{eq:MainProblem}.

\begin{theorem}
  Given $K > 0$, Algorithm \ref{alg:Main} solves problem \ref{eq:MainProblem} with $k_{ij}(t) \in [1, K]$, where $k_{ij}(t)$ is defined in the algorithm.
\begin{proof}
The theorem is the consequence of Lemmas \ref{lem:solution2} and \ref{lem:level} and the fact that $k_{ij}(t) \in [1, K]$.
\end{proof}
\end{theorem}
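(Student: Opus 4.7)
The plan is to verify two things in sequence: that $k_{ij}(t) \in [1, K]$, and that the output of Algorithm \ref{alg:Main} is an optimal solution of problem \eqref{eq:MainProblem} in each of its two branches. The bound on $k_{ij}(t)$ is immediate from its definition as $\max\prts{ 1, \min\prts{ K, \cdot } }$, and in particular guarantees $k_{ij}(t) > 0$ so the hypotheses of Lemmas \ref{lem:solution2} and \ref{lem:level} are met.

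For the if-branch I would first match notation between Algorithm \ref{alg:Main} and Section \ref{sec:solution}: the quantity $y_{ij}^d(t)$ computed in the algorithm is exactly $Q_i^d - Q_j^d + z_{ij}^d$, so $\round{ \prts{ y_{ij}^d(t) }_+ / k_{ij}(t) } = x_{ij}^{d\maxvar}$ by \eqref{eq:xijdmax}. The if-condition is then precisely the hypothesis $\sum_{d \in \set{D}_{ij}} x_{ij}^{d\maxvar} \leq T c_{ij}$ of Lemma \ref{lem:solution2}, which immediately certifies that the returned allocation minimizes \eqref{eq:MainProblem}.

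For the else-branch, entered exactly when $\sum_{d \in \set{D}_{ij}} x_{ij}^{d\maxvar} > T c_{ij}$, I would argue that Algorithm \ref{alg:PacketFilling} also yields an optimal solution by a greedy-exchange argument. Lemma \ref{lem:level} tells me that at each iteration, incrementing the commodity with the largest unfulfilled level $l_{ij}^d$ decreases the separable objective by at least as much as any alternative single increment. Two auxiliary claims are then needed to package this into global optimality: (a) no commodity is ever pushed beyond its $x_{ij}^{d\maxvar}$ threshold, proved by contradiction --- if iteration $n \leq T c_{ij}$ were the first to cross the threshold for some commodity $d$, then by the definition of $x_{ij}^{d\maxvar}$ its unfulfilled level would already satisfy $l_{ij}^d < k_{ij}(t)/2$, while the assumption $\sum_e x_{ij}^{e\maxvar} > T c_{ij}$ combined with the fact that no other commodity has yet hit its own max forces some other $e$ to still satisfy $l_{ij}^e \geq k_{ij}(t)/2$, contradicting the argmax rule; and (b) the loop runs exactly $T c_{ij}$ times, so the link capacity is fully allocated. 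Together these restrict the feasible set to integer vectors in $\prod_d \prts{ 0, x_{ij}^{d\maxvar} }$ summing to $T c_{ij}$, on which a greedy marginal-cost rule is optimal for the separable convex objective.

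The main obstacle is the no-over-allocation claim in the else-branch; once it is established, global optimality of packet-filling follows by a straightforward induction on iterations using Lemma \ref{lem:level}, and the theorem then assembles from the two branches together with the elementary $k_{ij}(t) \in [1, K]$ bound.
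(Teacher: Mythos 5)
Your proposal matches the paper's approach: the one-sentence proof in the paper delegates to the surrounding text of Section~\ref{sec:solution}, which handles the if-branch via Lemma~\ref{lem:solution2} (after identifying $\round{[y_{ij}^d(t)]_+/k_{ij}(t)}$ with $x_{ij}^{d\maxvar}$ exactly as you do), and handles the else-branch via Lemma~\ref{lem:level} together with the same two observations you raise --- the by-contradiction argument that no commodity is pushed past $x_{ij}^{d\maxvar}$ and the fact that the loop exhausts the full $Tc_{ij}$ units. Your write-up is merely more explicit about the induction that packages the pairwise exchange property of Lemma~\ref{lem:level} into global optimality of the greedy fill on the separable convex objective, which the paper asserts without spelling out; otherwise the decomposition, lemmas invoked, and role of $k_{ij}(t)\in[1,K]$ are identical.
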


\subsection{Stability Analysis}
Problem \ref{eq:MainProblem} with $k_{ij}(t) \in [1, K]$ is shown to be a class of throughput-optimal policies.  Let $Q(t) = \prtr{ Q_i^d(t) }_{i \in \set{S}, d \in \set{D}}$ be a vector of all backlogs at time $t$.  Define $\norm{z}_1$ as the $l_1$-norm, e.g., $\norm{ Q(t) }_1 = \sum_{i \in \set{S}} \sum_{d \in \set{D}} Q_i^d(t)$.

\begin{theorem}
  When Assumption \ref{ass:Slater} holds, the network is strongly stable:
  \begin{equation*}
    \limsup_{N \rightarrow \infty} \frac{1}{NT} \sum_{n=0}^{N-1}\sum_{\tau=0}^{T-1} \expect{ \norm{Q(nT + \tau)}_1 } \leq \frac{G_1}{\epsilon} + G_2,
  \end{equation*}
  where $G_1 = \abs{\set{S}} \abs{\set{D}} \prts{ T \delta^2 \prtr{ \abs{\set{S}} + 1 } + 2 K T \delta^2 \abs{\set{S}} }$ and $G_2 = (T-1)\delta(\abs{\set{S}} + 1)/2$.
\end{theorem}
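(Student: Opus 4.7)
The plan is a Lyapunov drift argument over $T$-slot intervals using the quadratic Lyapunov function $L(Q(t)) = \frac{1}{2}\sum_{i \in \set{S}, d \in \set{D}} (Q_i^d(t))^2$. The structure follows the standard MaxWeight-style stability template (cf.~\cite{Neely:SNO,MaxWeight}), but with an extra reconciliation step because Algorithm~\ref{alg:Main} optimizes the perturbed objective~\eqref{eq:MainProblem} rather than a pure linear drift coefficient. First I would square the $T$-slot queue dynamics~\eqref{eq:QidT}, use $([z]_+)^2 \leq z^2$, and sum over all $(i,d)$. Since per interval the arrivals and each $x_{ij}^d(t,T)$ lie in $[0, T\delta]$, taking the conditional expectation given $Q(t)$ and rearranging (swapping the order of summation over $(i,j,d)$ to pair each link with its two endpoints) produces
\begin{equation*}
  \expect{L(Q(t+T)) - L(Q(t)) \mid Q(t)} \leq B + \expect{M(t) + N(t) \mid Q(t)},
\end{equation*}
where $M(t) = \sum_{(i,j)} \sum_{d \in \set{D}_{ij}} x_{ij}^d(t,T)\prts{Q_j^d(t) - Q_i^d(t)}$ is the MaxWeight-like term, $N(t)$ collects the $a_i^d$ and $Tb_i^d$ contributions weighted by $Q_i^d(t)$, and $B$ is a bounded-noise constant that accounts for the $\abs{\set{S}}\abs{\set{D}} T \delta^2 (\abs{\set{S}}+1)$ part of $G_1$.

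The technical heart of the argument is handling the quadratic penalty in~\eqref{eq:MainProblem} when comparing against the Slater benchmark. Let $\prtc{x_{ij}^{d\ast}(\tau)}$ be the randomized policy of Assumption~\ref{ass:Slater} and set $x_{ij}^{d\ast}(t,T) = \sum_{\tau=t}^{t+T-1} x_{ij}^{d\ast}(\tau)$. Because Algorithm~\ref{alg:Main} minimizes problem~\eqref{eq:MainProblem} on each link and $\prtc{x_{ij}^{d\ast}(t,T)}$ is feasible for it,
\begin{equation*}
  \sum_{d} x_{ij}^d(t,T) \prts{Q_j^d(t) - Q_i^d(t)} \leq \sum_{d} x_{ij}^{d\ast}(t,T) \prts{Q_j^d(t) - Q_i^d(t)} + R_{ij}(t),
\end{equation*}
where $R_{ij}(t)$ is the difference of the quadratic penalties at $x^{\ast}$ and at the algorithm's choice. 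Since $k_{ij}(t) \in [1, K]$ and both $x_{ij}^{d\ast}(t,T)$ and $x_{ij}^d(t-T,T)/k_{ij}(t)$ lie in $[0, T\delta]$, $R_{ij}(t)$ is uniformly bounded by a state-independent constant whose sum over all links and commodities yields the $2KT \delta^2 \abs{\set{S}}$ contribution to $G_1$. Summing the per-link inequality and adding $N(t)$ gives $M(t) + N(t) \leq M^{\ast}(t) + N(t) + \sum_{(i,j)} R_{ij}(t)$; substituting the Slater inequality of Assumption~\ref{ass:Slater} (which guarantees $\expect{M^{\ast}(t) + N(t) \mid Q(t)} \leq -\epsilon \norm{Q(t)}_1$ under $x^{\ast}$) then produces
\begin{equation*}
  \expect{L(Q(t+T)) - L(Q(t)) \mid Q(t)} \leq G_1 - \epsilon \norm{Q(t)}_1.
\end{equation*}

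From here, standard telescoping over reconfiguration times $n = 0, \dotsc, N-1$, taking total expectations, and dividing by $N$ gives $\limsup_N \frac{1}{N}\sum_{n=0}^{N-1} \expect{\norm{Q(nT)}_1} \leq G_1/\epsilon$. The $G_2$ correction arises when extending this bound to every slot $nT + \tau$ with $\tau \in \prtc{0, \dotsc, T-1}$: by~\eqref{eq:Qid} each queue can grow by at most $\delta(\abs{\set{S}}+1)$ per slot (one external arrival plus at most $\abs{\set{S}}$ previous-hop inputs each bounded by $\delta$), so averaging over $\tau$ produces the additive $(T-1)\delta(\abs{\set{S}}+1)/2$. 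The chief obstacle throughout is the second step: the quadratic perturbation in~\eqref{eq:MainProblem} has no analogue in classical MaxWeight drift arguments, and it is only the uniform cap $k_{ij}(t) \leq K$ enforced by Algorithm~\ref{alg:Main} that keeps the overhead $R_{ij}(t)$ state-independent rather than queue-dependent, which is why the $K$-factor appears in $G_1$.
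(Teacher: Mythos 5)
Your proof is correct and follows essentially the same route as the paper: a $T$-slot quadratic Lyapunov drift, invoking the algorithm's optimality for the perturbed problem~\eqref{eq:MainProblem} against the Slater benchmark, bounding the state-independent proximal overhead by a constant proportional to $K$ (giving the $2KT\delta^2\abs{\set{S}}$ piece of $G_1$), and correcting for within-interval growth to obtain $G_2$. The only cosmetic difference is that you package the penalty mismatch as an explicit $R_{ij}(t)$ term, whereas the paper first adds the non-negative proximal term to the drift bound, minimizes, then substitutes the benchmark and caps the remaining penalty by $KT^2\delta^2$; these are algebraically equivalent.
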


\begin{proof}
Squaring both sides of \eqref{eq:QidT}, rearranging, and bounding terms (see \cite{Neely:SNO} for example) leads to
\begin{multline*}
  \frac{1}{2} \prts{ Q_i^d(t+T)^2 - Q_i^d(t)^2 } \leq Q_i^d(t) \prts{ \sum_{\tau=t}^{t+T-1} a_i^d(\tau) - Tb_i^d }\\+Q_i^d(t) \Bigg[ \sum_{j \in \set{P}_i^d} x_{ji}^d(t,T) - \sum_{j \in \set{H}_i^d} x_{ij}^d(t,T) \Bigg] + C_i^d,
\end{multline*}
where $C_i^d = \frac{ T^2 \delta^2 \prtr{ \abs{ \set{P}_i^d } + \abs{ \set{H}_i^d } + 2 } }{2}$.  Define a $T$-slot quadratic Lyapunov drift of queue backlogs \cite{Neely:SNO} as
\begin{equation*}
  \Delta(t, T) = \frac{1}{2} \prts{ \norm{ Q(t+T) }^2 - \norm{ Q(t) }^2 },
\end{equation*}
where $\norm{z}$ is the $l_2$-norm of vector $z$, i.e., $\norm{Q(t)}^2 = \sum_{i \in \set{S}}\sum_{d \in \set{D}} Q_i^d(t)^2$. It holds that
\begin{multline}
  \label{eq:DriftBound}
  \Delta(t, T) \leq \sum_{i \in \set{S}} \sum_{d \in \set{D}} \prtc{ C_i^d + Q_i^d(t) \prts{ \sum_{\tau=t}^{t+T-1} a_i^d(\tau) - Tb_i^d } } \\ + \sum_{i \in \set{S}} \sum_{d \in \set{D}} Q_i^d(t) \Bigg[ \sum_{j \in \set{P}_i^d} x_{ji}^d(t, T) - \sum_{j \in \set{H}_i^d} x_{ij}^d(t, T) \Bigg]
\end{multline}
The second line of the above equation can be rewritten as
\begin{multline}
  \label{eq:Rearrange}
\sum_{i \in \set{S}} \sum_{d \in \set{D}} Q_i^d(t) \Bigg[ \sum_{j \in \set{P}_i^d} x_{ji}^d(t, T) - \sum_{j \in \set{H}_i^d} x_{ij}^d(t, T) \Bigg] \\= \sum_{i \in \set{S}} \sum_{j \in \set{H}_i} \sum_{d \in \set{D}_{ij}} x_{ij}^d(t, T) \prts{ Q_j^d(t) - Q_i^d(t) },
\end{multline}
using the fact that $x_{ij}^d(t, T) = 0$ for every $d \in \set{D}\backslash\set{D}_{ij}$.

  Instead of minimizing the above expression, which leads to the MaxWeight algorithm, an state-dependent proximal term
\begin{equation*}
  \frac{k_{ij}(t)}{2} \prts{ x_{ij}^d(t, T) - \frac{ x_{ij}^d(t-T,T) }{ k_{ij}(t) } }^2 \quad \text{with}
\end{equation*}
\begin{equation*}
  k_{ij}(t) = \prts{ \frac{ 1 }{T c_{ij}} \sum_{d \in \set{D}_{ij}} \prts{ Q_i^d(t) - Q_j^d(t) + x_{ij}^d(t-T, T) }_+ }_{[1,K]}
\end{equation*}
 is introduced\footnote{The $x_{ij}^d(t-T, T)$ in the proximal term can be replaced by $z_{ij}^d(t) / \alpha$ for any $0 \leq z_{ij}^d(t) < \infty$ and $0 < \alpha < \infty$ to allow more control flexibility.}, where $\prts{z}_{\prts{1,K}} = \max[1, \min[ K, z ] ]$.  This proximal term is non-negative and is upper bounded by $K T^2 \delta^2$, so it holds from \eqref{eq:DriftBound} and \eqref{eq:Rearrange} that
\begin{align}
  \Delta(t, T)
  & \leq \sum_{i \in \set{S}} \sum_{d \in \set{D}} \prtc{ C_i^d + Q_i^d(t) \prts{ \sum_{\tau=t}^{t+T-1} a_i^d(\tau) - Tb_i^d } } \notag\\
  & \quad + \sum_{i \in \set{S}} \sum_{j \in \set{H}_i} \sum_{d \in \set{D}_{ij}} \Bigg\{ x_{ij}^d(t, T) \prts{ Q_j^d(t) - Q_i^d(t) } \notag\\
  & \quad + \frac{k_{ij}(t)}{2} \prts{ x_{ij}^d(t, T) - \frac{ x_{ij}^d(t-T,T) }{ k_{ij}(t) } }^2 \Bigg\}.   \label{eq:DriftMin}
\end{align}
Minimizing the right-hand-side of \eqref{eq:DriftMin} with respect to $\prtc{x_{ij}^d(t,T)}_{d \in \set{D}_{ij}}$ leads to problem \eqref{eq:MainProblem}.  Applying the result from Algorithm \ref{alg:Main}, which solves the minimization at reconfiguration time $t \in \set{T}$, yields the bound for any other $\prtc{ \hat{x}_{ij}^d(t, T) }_{d \in \set{D}_{ij}}$ satisfying constraints in problem \eqref{eq:MainProblem}:
\begin{align*}
  \Delta(t, T)
  & \leq \sum_{i \in \set{S}} \sum_{d \in \set{D}} \prtc{ C_i^d + Q_i^d(t) \prts{ \sum_{\tau=t}^{t+T-1} a_i^d(\tau) - Tb_i^d } } \\
  & \quad + \sum_{i \in \set{S}} \sum_{j \in \set{H}_i} \sum_{d \in \set{D}_{ij}} \Bigg\{ \hat{x}_{ij}^d(t, T) \prts{ Q_j^d(t) - Q_i^d(t) } \\
  & \quad + \frac{k_{ij}(t)}{2} \prts{ \hat{x}_{ij}^d(t, T) - \frac{ x_{ij}^d(t-T, T) }{ k_{ij}(t) } }^2 \Bigg\}
\end{align*}
Since the proximal term is bounded and policy $\prtc{ x_{ij}^{d\ast}(t, T) = \prtc{ x_{ij}^{d\ast}(\tau) }_{\tau = t }^{t+T-1} }_{d \in \set{D}_{ij}}$, constructed from the randomized policy in Assumption \ref{ass:Slater}, is one of those $\prtc{ \hat{x}_{ij}^d(t, T) }_{d \in \set{D}_{ij}}$, it follows that
\begin{align*}
  \Delta(t, T)
  & \leq \sum_{i \in \set{S}} \sum_{d \in \set{D}} \prtc{ C_i^d + Q_i^d(t) \prts{ \sum_{\tau=t}^{t+T-1} a_i^d(\tau) - Tb_i^d } } \\
  & \hspace{-3em}+ \sum_{i \in \set{S}} \sum_{j \in \set{H}_i} \sum_{d \in \set{D}_{ij}} \Bigg\{ x_{ij}^{d\ast}(t, T) \prts{ Q_j^d(t) - Q_i^d(t) } + K T^2 \delta^2 \Bigg\}
\end{align*}
Applying identity \eqref{eq:Rearrange}, taking expectation, and using the independent property of the randomized policy gives
\begin{multline*}
  \expect{ \Delta(t, T) } \leq \sum_{i \in \set{S}} \sum_{d \in \set{D}} \Bigg\{ D_i^d + \expect{ Q_i^d(t) } \times \\
  \sum_{\tau = t}^{t+T-1} \expect{ \sum_{j \in \set{P}_i^d} x_{ji}^{d\ast}(\tau) + a_i^d(\tau) - \sum_{j \in \set{H}_i^d} x_{ij}^{d\ast}(\tau) - b_i^d } \Bigg\},
\end{multline*}
where $D_i^d = C_i^d + K T^2 \delta^2\prtr{ \abs{\set{P}_i^d} + \abs{\set{H}_i^d} }$.  Assumption \ref{ass:Slater} implies for all $t \in \set{T}$ that
\begin{equation}
  \label{eq:DriftNeg}
  \frac{1}{2T} \expect{ \norm{ Q(t+T) }^2 - \norm{ Q(t) }^2 } \leq G_1 - \epsilon \sum_{i \in \set{S}} \sum_{d \in \set{D}} \expect{ Q_i^d(t) },
\end{equation}
where $G_1$ is defined in the theorem.

Queue dynamic \eqref{eq:Qid} and the upper bound $\delta$ imply that $Q_i^d(t + \tau) \leq Q_i^d( t ) + \tau \delta ( \abs{ \set{S} } + 1 )$ for any $\tau \in \IntSet_+$ and $i \in \set{S}, d \in \set{D}$.  Summing for $\tau \in \prtc{0, 1, \dotsc, T-1}$ gives $\sum_{\tau = 0}^{T-1} Q_i^d(t+\tau) \leq T Q_i^d(t) + T (T - 1) \delta ( \abs{\set{S}} + 1 )/2$ and
\begin{equation*}
  Q_i^d(t) \geq \frac{1}{T} \sum_{\tau=0}^{T-1} Q_i^d(t+\tau) - G_2 \quad \text{for}~ t \in \IntSet_+,
\end{equation*}
where $G_2$ is defined in the theorem.  Substituting the above into inequality \eqref{eq:DriftNeg} yields:
\begin{multline*}
  \label{eq:DriftNeg}
  \frac{1}{2T} \expect{ \norm{ Q(t+T) }^2 - \norm{ Q(t) }^2 } \leq G_1 + \epsilon G_2  \\ - \frac{\epsilon}{T} \sum_{\tau=0}^{T-1} \expect{ \norm{ Q(t+\tau) }_1 }  \quad\text{for}~ t \in \set{T}.
\end{multline*}
Telescope summation for $t \in \prtc{0, T, \dotsc, (N-1)T}$ gives:
\begin{multline*}
  \frac{1}{2T} \expect{ \norm{ Q(NT) }^2 - \norm{ Q(0) }^2 } \leq G_1 N + \epsilon G_2 N \\- \frac{\epsilon}{T} \sum_{n=0}^{N-1} \sum_{\tau=0}^{T-1} \expect{ \norm{ Q(nT+\tau) }_1 }.
\end{multline*}
Rearranging terms and taking supremum limit as $N \rightarrow \infty$ proves the theorem.
\end{proof}

\section{System Realization}
\label{sec:Implementation}

The previous section provides an ideal allocation of decision variables.  This section develops a heuristic improvement
that is easier to implement for practical switches. 

\subsection{Approximation of Common Queues}
\label{sec:QueueApprox}
An SDN switch has output queues at each of its ports\cite{DeTail,DiffMax,OpenFlow}.  Those queues can be assigned to each commodity.  Let $Q_{ij}^d(t)$ denote the backlog of a queue for commodity $d$ at the port of switch $i$ connecting to switch $j$ at time $t$.  The queue backlog $Q_i^d(t)$ in Section \ref{sec:Queue} can be approximated by $\tilde{Q}_i^d(t) = \sum_{j \in \set{H}_i^d} Q_{ij}^d(t)$, for $i \in \set{S}, d \in \set{D}$.  It can be shown that this approximation becomes exact when the port's queues have never been emptied.  Note that OpenFlow \cite{OpenFlow} allows $2^{32}$ unique queues per port, but availability of those queues may depend on a switch.  This work encourages next generation switches to have a high number of available queues.

\subsection{Additional Packet Headers}
Three fields are appended into the IP header as IP options: \texttt{CommodityId}, \texttt{QueueInfo}, and \texttt{HashField}.  The \texttt{CommodityId} identifies the commodity of \texttt{QueueInfo} which stores the rounded value of exponential moving average of approximated queue backlog, $\tilde{Q}_j^d(t)$.  A packet from switch $j$ to switch $i$ carries queue information of a commodity, which is circularly selected from commodities in $\set{D}_{ij}$\footnote{This round-robin technique is inspired by CONGA \cite{CONGA}.  The concept can also be applied to VXLAN \cite{VXLAN}.}.  The \texttt{HashField} is used for traffic splitting and is explained in Section \ref{sec:Hashing}.

Once a packet with the additional headers from switch $j$ arrives to switch $i$, the contained queue information is extracted and stored in a local memory, which is denoted by $M_{ij}^d(t)$. This is the most recent queue information for commodity $d$ on link $(i,j)$ up to time $t$, where $d$ is the value in  \texttt{CommodityId}.  The header processing can be implemented by P4 \cite{P4}, NetFPGA \cite{NetFPGA}, or a custom ASIC.


\subsection{Weighted Fair Queueing}
Each port of switch $i$ connecting to switch $j \in \set{H}_i$ is configured with weighted fair queueing.  Let $r_{ij}^d(t, T)$ denote the measured number of commodity-$d$ packets transmitted from switch $i$ to switch $j$ during the interval $[t, t+T)$.  At every configuration time $t \in \set{T}$, the weight $w_{ij}^d(t)$ for the interval $[t,t+T)$ is
\begin{equation*}
  w_{ij}^d(t) = \max\prts{ 1, \tilde{Q}_i^d(t) - M_{ij}^d(t) + r_{ij}^d(t-T,T)/\alpha }, d \in \set{D}_{ij}
\end{equation*}
and $w_{ij}^d(t) = 0$ for $d \in \set{D}\backslash\set{D}_{ij}$.  Parameter $\alpha > 0$ is added to scale the magnitude of actual traffic to match a finite queue capacity.  From these weights, $w_{ij}^d(t)/\sum_{e \in \set{D}} w_{ij}^e(t)$ fraction of link capacity is given to commodity $d$, which corresponds to the intuition in equation \eqref{eq:WFQIntuition}.  Note that actual traffic  $r_{ij}^d(t-T,T)$ is used instead of $x_{ij}^d(t-T,T)$, because it is a better approximation under TCP traffic.

\subsection{Traffic Splitting by Hashing}
\label{sec:Hashing}
A sending rate of a TCP connection is reduced when out-of-order packets are received at a destination.  Hashing a packet to a next-hop switch based on \texttt{HashField} is implemented to reduce packet reordering.  The \texttt{HashField} is generated once for the entire packet life.  Packets from the same TCP connection have the same \texttt{HashField}, so they are hashed to the same path.  Reordering does not occur if a hash rule at each switch is the same for the entire TCP connection.  The hash rule is calculated as follows.  For each commodity $d \in \set{D}$, define $\prtc{ s_{ij}^d(t) }_{j \in \set{H}_i^d}$ as a solution of
\begin{align}
  \maximize &\quad \min_{j \in \set{H}_i^d} \prtc{ Q_{ij}^d(t) - r_{ij}^d(t-T,T) + s_{ij}^d(t) } \label{eq:MaxMin}\\
    \subjectto &\quad \sum_{j \in \set{H}_i^d} s_{ij}^d(t) = \sum_{j \in \set{H}_i^d} r_{ij}^d(t-T,T) \notag\\
      & \quad s_{ij}^d(t) \in \IntSet_+ \quad \text{for all}~ j \in \set{H}_i^d. \notag
\end{align}
This problem can be solved in polynomial time.  Let $s_{ij}^d(t) = 0$ for all $j \in \set{H}_i^d$.  Iteratively, $s_{ij}^d(t)$ is increased for a group of indices in $\Argmin_{j \in \set{H}_i^d} \prts{ Q_{ij}^d(t) - r_{ij}^d(t-T,T) + s_{ij}^d(t) }$ until the equality constraint is met\footnote{The process can be accelerated by increasing multiple values of $s_{ij}^d(t)$'s per iteration.}.  This process keeps increasing the $\min_{j \in \set{H}_i^d} \prtc{ Q_{ij}^d(t) - r_{ij}^d(t-T,T) + s_{ij}^d(t) }$.  The intuition of problem \eqref{eq:MaxMin} is that it attempts to equalize the backlog levels at all ports at end of the interval $[t,t+T)$ by using $r_{ij}^d(t-T,T)$ as an estimate of actual transmission $r_{ij}^d(t,T)$.  The attempt tries to make the approximation in Section \ref{sec:QueueApprox} exact.  The splitting ratio of the port connecting to switch $j$ is $f_{ij}^d(t) = s_{ij}^d(t) / \sum_{k \in \set{H}_i^d} s_{ik}^d(t)$ for $j \in \set{H}_i^d$.  

\section{Simulations}
\label{sec:Simulations}

\subsection{Ideal Simulation}
Algorithm \ref{alg:Main} is simulated according to the system model in Section \ref{sec:Model}.  A switch uses a token bucket mechanism to ensure that transmission rate per interval satisfy constraint \eqref{eq:xijdT} after $x_{ij}^d(t, T)$ is determined.

\begin{figure}
  \centering
  \includegraphics[scale=1.0]{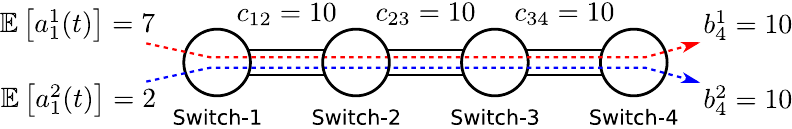}\vspace{-2mm}
  \caption{Line Network with $\set{S}=\prtc{1,2,3,4}, \set{D}=\prtc{1,2}$ and $\set{H}_1^d = \prtc{ 2 }, \set{H}_2^d = \prtc{ 3 }, \set{H}_3^d = \prtc{ 4 }$ for $d \in \set{D}$}\vspace{-4.5mm}
  \label{fig:IdealLineNet}
\end{figure}

A line network in Fig. \ref{fig:IdealLineNet} is simulated with the interval length $T = 100$ and the constant $K = 10$.   The network is simulated for $10^5$ slots.  The average backlogs shown in Table \ref{tlb:IdealLineNet} are calculated after Algorithm \ref{alg:Main} and MaxWeight converge.  The link-capacity sharing of Algorithm \ref{alg:Main} leads to small queue backlogs.  For scaling comparison, when $T = 1000$, the average backlogs of commodity 1 at switch 1 are respectively $23.94$ and $16005.63$ under Algorithm \ref{alg:Main} and MaxWeight.  In this simulation, the maximum value of $k_{ij}(t)$ for all $i \in \set{S}, j \in \set{H}_i, t \in \prtc{0, \dotsc, 10^5}$ is $1.732 < K$.

\begin{table}
  \caption{Average backlogs under Algorithm \ref{alg:Main} and MaxWeight}
  \label{tlb:IdealLineNet}
  \begin{tabular}{ c | r | r | r | r |}\cline{2-5}
    & \multicolumn{2}{ c | }{Commodity 1} & \multicolumn{2}{ c |}{Commodity 2} \\\cline{2-5}
    & Algorithm \ref{alg:Main} & MaxWeight & Algorithm \ref{alg:Main} & MaxWeight  \\\hline
    \multicolumn{1}{| c |}{Switch-1} & $14.88$ & $2598.19$ & $6.25$ & $213.45$ \\
    \multicolumn{1}{| c |}{Switch-2} &  $8.17$ & $1699.99$ & $2.35$ & $195.12$ \\
    \multicolumn{1}{| c |}{Switch-3} &  $7.52$ &  $700.01$ & $2.16$ & $199.51$ \\
    \multicolumn{1}{| c |}{Switch-4} &  $7.12$ &    $7.00$ & $2.00$ &   $2.00$ \\\hline
  \end{tabular}\vspace{-2.5mm}
\end{table}

A network in Fig. \ref{fig:IntraDC} is simulated with $T=100$ and $K=10$.  After $10^5$ slots, the average backlogs per queue (from all $124$ queues) are $34.23$ and $555.95$ under Algorithm \ref{alg:Main} and MaxWeight.  An event $k_{ij}(t) = K$ occurs $93.41\%$ of the times due to that the network operates near its capacity boundary.  Reducing the arrivals by $12\%$ ($24\%$) yields $20.68\%$ ($5.49\%$) of the times that $k_{ij}(t) = K$.  This suggests that Algorithm \ref{alg:PacketFilling} is rarely invoked, i.e., $k_{ij}(t) < K$, when a network does not operate near its capacity boundary.  In practice, TCP flows with congestion control are different from the i.i.d. arrivals, so Algorithm \ref{alg:PacketFilling} is not included in the heuristic algorithm.

\begin{figure}
  \centering
  \includegraphics[scale=1.0]{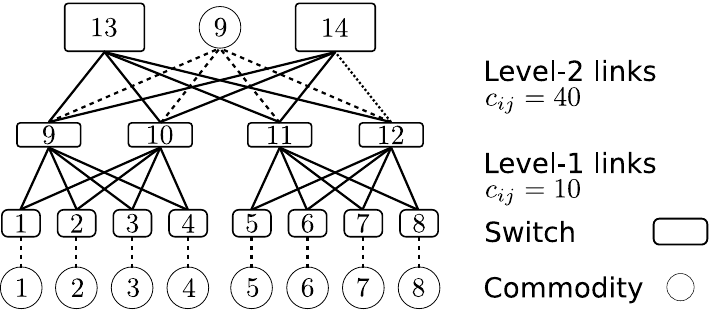}\vspace{-2mm}
  \caption{Intra datacenter network with $\set{S}=\prtc{1,2,\dotsc,14}, \set{D}=\prtc{1,2,\dotsc,9}$.  Each next-hop set $\set{H}_i^d$ contains next-hop switches with the shortest distance to commodity $d$, e.g., $\set{H}_1^8 = \prtc{9,10}, \set{H}_9^8 = \prtc{13,14}, \set{H}_{13}^8 = \prtc{11,12}, \set{H}_{11}^{8} = \prtc{8}, \set{H}_1^9 = \prtc{9,10} = \set{H}_1^e$ for $e \in \prtc{2,3,4}$.  Arrivals are $\expect{a_i^d(t)} = 2$ and $\expect{a_i^9(t)} = 1$ for $d,i \in \prtc{1,\dotsc,8}$; otherwise 0.  Departure rate is $b_i^d = 20$ if commodity $d$ connects to switch $i$; otherwise $0$.}\vspace{-4.5mm}
  \label{fig:IntraDC}
\end{figure}

\subsection{Network Simulator}
The heuristic in-network load-balancing algorithm in Section \ref{sec:Implementation} is simulated by OMNeT++ \cite{OMNet}.  All simulations share the following setting.  Capacity of each commodity queue at a switch port is $200$ packets.  For ECMP setting, a shared queue at a switch port has buffer capacity of $200\times\abs{\set{D}}$ packets, where $\set{D}$ is a set of commodities in a considered network.  Configuration interval is $T = 1$ms, and the scaling parameter is $\alpha = 5$.  The NewReno TCP from INET Framework \cite{INET} is adjusted for $10$Gbps and $40$Gbps link speeds.  Every TCP flow is randomly established during $[0s, 0.5s]$ and starts during $[1s, 1.01s]$.  Each flow transmits $1$MB of data.  Flow completion time (FCT) is measured as the performance metrics, which are also used in \cite{CONGA,DeTail}.  FCT is the duration of time to complete a flow, i.e., the time to send $1$MB of data.

A network without commodity 9 in Fig. \ref{fig:IntraDC} is simulated.  The speeds of level-1 links and level-2 links are respectively $10$Gbps and $40$Gbps.  $64$ flows are generated from a commodity to one another, and the total number of flows in the network is $3584$.  The FCTs under the heuristic algorithm and ECMP are shown in Fig. \ref{fig:IntraSymFCT}, and their variances are $5.8 \times 10^{-4}$ and $19.5 \times 10^{-4}$.  Unsurprisingly, they are comparable, since the topology is optimized for ECMP.  The FCTs under the heuristic algorithm has less variation, as the distribution of flows are more balance.  Note that the tail of FCTs is critical for interactive services \cite{DeTail}.

\begin{figure}
  \centering
  \includegraphics[scale=0.26]{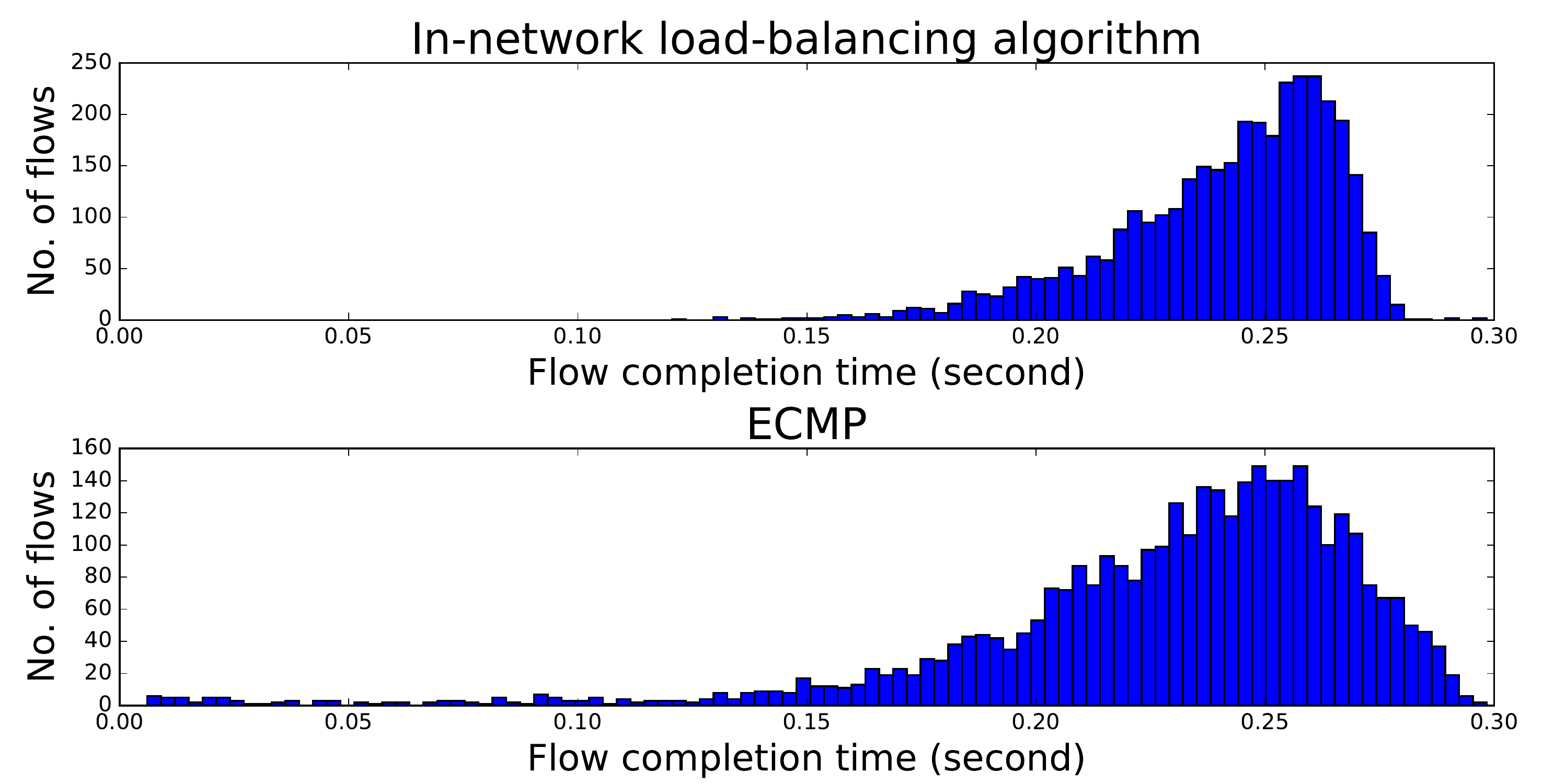}\vspace{-2mm}
  \caption{The FCTs from the network in Fig. \ref{fig:IntraDC} without commodity $9$}\vspace{-2mm}
  \label{fig:IntraSymFCT}
\end{figure}

A follow-up scenario is simulated when the link between switches $12$ and $14$ in Fig. \ref{fig:IntraDC} fails.  The FCTs of all flows are shown Fig. \ref{fig:IntraAsymFCT}.  The heuristic algorithm balances the flows better than ECMP.  Switches $9$ and $10$ hash more flows to switch $13$ than switch $14$, while ECMP hashes flows equally.

\begin{figure}
  \centering
  \includegraphics[scale=0.26]{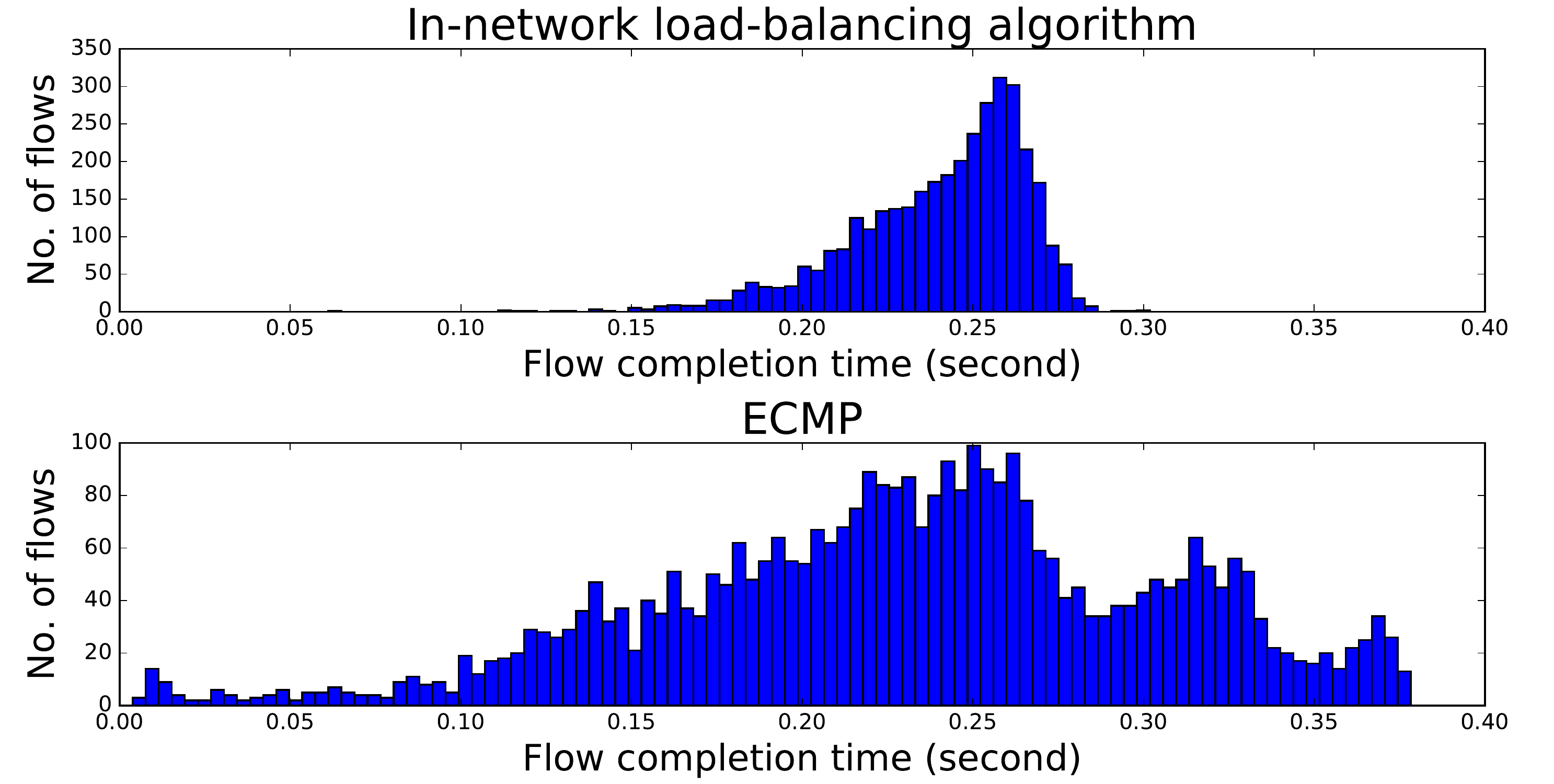}\vspace{-2mm}
  \caption{The FCTs of all flows in the network in Fig. \ref{fig:IntraDC} where commodity $9$ is omitted and the link between switches $12$ and $14$ fails }\vspace{-4.5mm}
  \label{fig:IntraAsymFCT}
\end{figure}

A network in Fig. \ref{fig:Priority} illustrates the adaptiveness of the heuristic algorithm when some link capacity is taken away by priority flows.  The FCTs of all flows are shown in Fig. \ref{fig:PriorityFCT}.  The FCTs under the heuristic algorithm is more balance compared to the FCTs under ECMP, as switches $1$ and $2$ hash more flows to switch $4$ instead of equally hashing in ECMP case.  Note that if the priority flows begin shortly after $1.01$s, the same result is observed.

\begin{figure}
  \centering
  \includegraphics[scale=1.0]{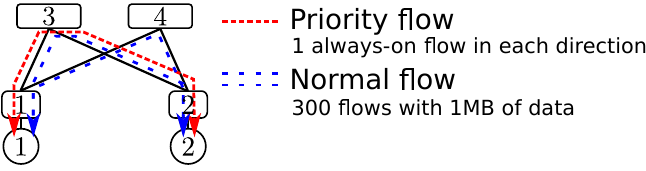}\vspace{-2mm}
  \caption{A network with a priority flow in each direction and $600$ normal flows between commodities $1$ and $2$}
  \label{fig:Priority}
\end{figure}

A similar trend can be observed from scenarios with short flows ($10$KB of data per flow) when $T = 0.1$ms and $\alpha = 1$.

\begin{figure}
  \centering
  \includegraphics[scale=0.26]{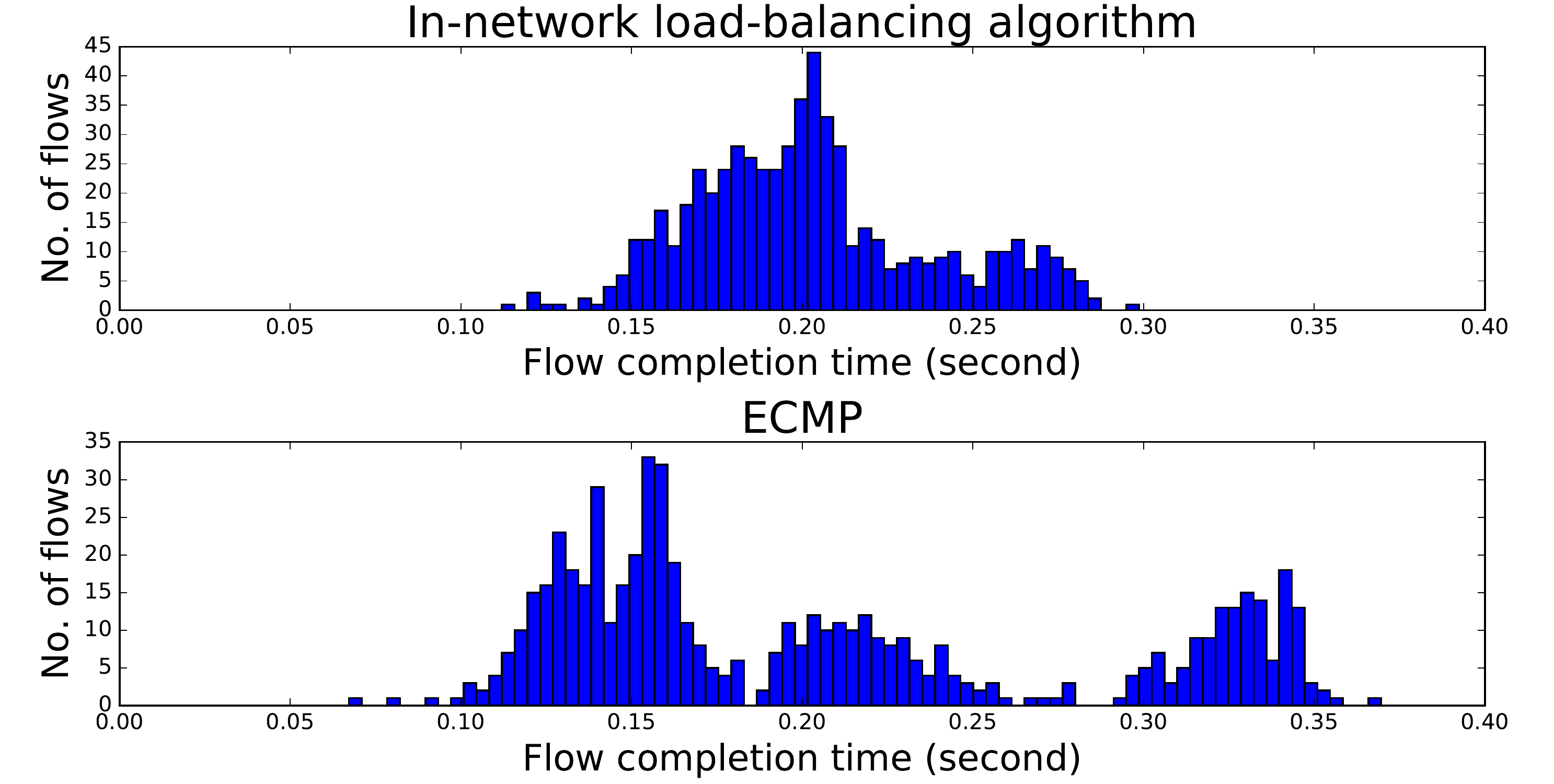}\vspace{-2mm}
  \caption{The FCTs of normal flows between commodities $1$ and $2$ in Fig. \ref{fig:Priority}}\vspace{-4.5mm}
  \label{fig:PriorityFCT}
\end{figure}

\section{Conclusion}
\label{sec:Conclusion}
This paper showed that practical load-balancing in datacenter networks can be tackled by the concept of throughput optimality.  
Our first algorithm is a novel variation of the MaxWeight concept that treats control plane and data plane timescales (useful for software defined networking), 
allows link capacity sharing during a control plane interval, incorporates weighted fair queueing aspects, 
and comes with a proof of throughput optimality.  Next, this algorithm was modified to include heuristic 
improvements that allow easy operation with practical switch capabilities and works gracefully with TCP flows.  
Ideal and OMNeT++ simulations show promising potential against existing MaxWeight and ECMP.

\bibliographystyle{IEEEtran}
\bibliography{reference}

\begin{thebibliography}{10}
\providecommand{\url}[1]{#1}
\csname url@samestyle\endcsname
\providecommand{\newblock}{\relax}
\providecommand{\bibinfo}[2]{#2}
\providecommand{\BIBentrySTDinterwordspacing}{\spaceskip=0pt\relax}
\providecommand{\BIBentryALTinterwordstretchfactor}{4}
\providecommand{\BIBentryALTinterwordspacing}{\spaceskip=\fontdimen2\font plus
\BIBentryALTinterwordstretchfactor\fontdimen3\font minus
  \fontdimen4\font\relax}
\providecommand{\BIBforeignlanguage}[2]{{%
\expandafter\ifx\csname l@#1\endcsname\relax
\typeout{** WARNING: IEEEtran.bst: No hyphenation pattern has been}%
\typeout{** loaded for the language `#1'. Using the pattern for}%
\typeout{** the default language instead.}%
\else
\language=\csname l@#1\endcsname
\fi
#2}}
\providecommand{\BIBdecl}{\relax}
\BIBdecl

\bibitem{VL2}
A.~Greenberg, J.~R. Hamilton, N.~Jain, S.~Kandula, C.~Kim, P.~Lahiri, D.~A.
  Maltz, P.~Patel, and S.~Sengupta, ``Vl2: A scalable and flexible data center
  network,'' \emph{SIGCOMM Comput. Commun. Rev.}, Aug. 2009.

\bibitem{ScalableTopology}
M.~Al-Fares, A.~Loukissas, and A.~Vahdat, ``A scalable, commodity data center
  network architecture,'' \emph{SIGCOMM Comput. Commun. Rev.}, Aug. 2008.

\bibitem{GoogleDC}
A.~Singh, J.~Ong, A.~Agarwal, G.~Anderson, A.~Armistead, R.~Bannon, S.~Boving,
  G.~Desai, B.~Felderman, P.~Germano, A.~Kanagala, J.~Provost, J.~Simmons,
  E.~Tanda, J.~Wanderer, U.~H\"{o}lzle, S.~Stuart, and A.~Vahdat, ``Jupiter
  rising: A decade of clos topologies and centralized control in google's
  datacenter network,'' \emph{SIGCOMM Comput. Commun. Rev.}, Aug. 2015.

\bibitem{FBDC}
A.~Roy, H.~Zeng, J.~Bagga, G.~Porter, and A.~C. Snoeren, ``Inside the social
  network's (datacenter) network,'' \emph{SIGCOMM Comput. Commun. Rev.}, Aug.
  2015.

\bibitem{Niagara}
N.~Kang, M.~Ghobadi, J.~Reumann, A.~Shraer, and J.~Rexford, ``Efficient traffic
  splitting on commodity switches,'' in \emph{Proceedings of the 11th ACM
  International on Conference on emerging Networking Experiments and
  Technologies (CoNEXT)}, 2015.

\bibitem{CONGA}
M.~Alizadeh, T.~Edsall, S.~Dharmapurikar, R.~Vaidyanathan, K.~Chu,
  A.~Fingerhut, V.~T. Lam, F.~Matus, R.~Pan, N.~Yadav, and G.~Varghese,
  ``Conga: Distributed congestion-aware load balancing for datacenters,''
  \emph{SIGCOMM Comput. Commun. Rev.}, Aug. 2014.

\bibitem{WCMP}
J.~Zhou, M.~Tewari, M.~Zhu, A.~Kabbani, L.~Poutievski, A.~Singh, and A.~Vahdat,
  ``Wcmp: Weighted cost multipathing for improved fairness in data centers,''
  in \emph{Proceedings of the Ninth European Conference on Computer Systems},
  ser. EuroSys '14, 2014.

\bibitem{BCube}
D.~L. H. W. X. Z. Y. S. C. T. Y. Z. S. L. G.~L. Chuanxiong~Guo, Guohan~Lu,
  ``Bcube: A high performance, server-centric network architecture for modular
  data centers,'' Aug. 2009.

\bibitem{SDNSwitch}
\BIBentryALTinterwordspacing
Software-defined networking: The new norm for networks. [Online]. Available:
  \url{https://www.opennetworking.org/images/stories/downloads/sdn-resources/white-papers/wp-sdn-newnorm.pdf}
\BIBentrySTDinterwordspacing

\bibitem{Hedera}
M.~Al-Fares, S.~Radhakrishnan, B.~Raghavan, N.~Huang, and A.~Vahdat, ``Hedera:
  Dynamic flow scheduling for data center networks,'' in \emph{Proceedings of
  the 7th USENIX Conference on Networked Systems Design and Implementation},
  ser. NSDI'10, 2010.

\bibitem{DeTail}
D.~Zats, T.~Das, P.~Mohan, D.~Borthakur, and R.~Katz, ``Detail: Reducing the
  flow completion time tail in datacenter networks,'' \emph{SIGCOMM Comput.
  Commun. Rev.}, Aug. 2012.

\bibitem{MaxWeight}
L.~Tassiulas and A.~Ephremides, ``Stability properties of constrained queueing
  systems and scheduling policies for maximum throughput in multihop radio
  networks,'' \emph{Automatic Control, IEEE Transactions on}, Dec. 1992.

\bibitem{McKeown:Switch}
N.~McKeown, A.~Mekkittikul, V.~Anantharam, and J.~Walrand, ``Achieving 100\%
  throughput in an input-queued switch,'' \emph{IEEE Transactions on
  Communications}, Aug. 1999.

\bibitem{Javidi:DCNet}
T.~Javidi, C.~H. Wang, and T.~Aktaş, ``A novel data center network
  architecture with zero in-network queuing,'' in \emph{Modeling and
  Optimization in Mobile, Ad Hoc, and Wireless Networks (WiOpt), 2015 13th
  International Symposium on}, May. 2015.

\bibitem{Neely:Power}
M.~J. Neely, E.~Modiano, and C.~E. Rohrs, ``Dynamic power allocation and
  routing for time-varying wireless networks,'' \emph{IEEE Journal on Selected
  Areas in Communications}, Jan. 2005.

\bibitem{Eryilmaz:Throughput}
A.~Eryilmaz and R.~Srikant, ``Joint congestion control, routing, and mac for
  stability and fairness in wireless networks,'' \emph{IEEE Journal on Selected
  Areas in Communications}, Aug. 2006.

\bibitem{Sucha:FIFO}
S.~Supittayapornpong and M.~Neely, ``Achieving utility-delay-reliability
  tradeoff in stochastic network optimization with finite buffers,'' in
  \emph{Computer Communications (INFOCOM), 2015 IEEE Conference on}, Apr. 2015.

\bibitem{WirelessFiniteBuff}
L.~B. Le, E.~Modiano, and N.~B. Shroff, ``Optimal control of wireless networks
  with finite buffers,'' \emph{IEEE/ACM Transactions on Networking}, Aug. 2012.

\bibitem{TCP-Backpressure}
H.~Seferoglu and E.~Modiano, ``Tcp-aware backpressure routing and scheduling,''
  \emph{IEEE Transactions on Mobile Computing}, Jul. 2016.

\bibitem{Scott:LIFO}
S.~Moeller, A.~Sridharan, B.~Krishnamachari, and O.~Gnawali, ``Routing without
  routes: The backpressure collection protocol,'' in \emph{Proceedings of the
  9th ACM/IEEE International Conference on Information Processing in Sensor
  Networks}, ser. IPSN '10, 2010.

\bibitem{ShadowQueue}
L.~Bui, R.~Srikant, and A.~Stolyar, ``Novel architectures and algorithms for
  delay reduction in back-pressure scheduling and routing,'' in \emph{Computer
  Communications (INFOCOM), 2009 IEEE Conference on}, Apr. 2009.

\bibitem{Neely:SNO}
M.~J. Neely, ``Stochastic network optimization with application to
  communication and queueing systems,'' \emph{Synthesis Lectures on
  Communication Networks}, 2010.

\bibitem{Sucha:Quadratic}
S.~Supittayapornpong and M.~Neely, ``Quality of information maximization for
  wireless networks via a fully separable quadratic policy,'' \emph{Networking,
  IEEE/ACM Transactions on}, Apr. 2015.

\bibitem{Longbo:Lagrange}
L.~Huang and M.~J. Neely, ``Delay reduction via lagrange multipliers in
  stochastic network optimization,'' \emph{Automatic Control, IEEE Transactions
  on}, Apr. 2011.

\bibitem{WFQ}
A.~Demers, S.~Keshav, and S.~Shenker, ``Analysis and simulation of a fair
  queueing algorithm,'' \emph{SIGCOMM Comput. Commun. Rev.}, Aug. 1989.

\bibitem{GPS}
A.~K. Parekh and R.~G. Gallager, ``A generalized processor sharing approach to
  flow control in integrated services networks: the single-node case,''
  \emph{IEEE/ACM Transactions on Networking}, Jun. 1993.

\bibitem{PathBase}
M.~Shafiee and J.~Ghaderi, ``A simple congestion-aware algorithm for load
  balancing in datacenter networks,'' in \emph{Computer Communications
  (INFOCOM), 2016 IEEE Conference on}, Apr. 2016.

\bibitem{OMNet}
\BIBentryALTinterwordspacing
Omnet++. [Online]. Available: \url{https://omnetpp.org/}
\BIBentrySTDinterwordspacing

\bibitem{DiffMax}
H.~Seferoglu and E.~Modiano, ``Separation of routing and scheduling in
  backpressure-based wireless networks,'' \emph{IEEE/ACM Transactions on
  Networking}, Jun. 2016.

\bibitem{OpenFlow}
\BIBentryALTinterwordspacing
Openflow switch specification: Version 1.5.1. [Online]. Available:
  \url{https://www.opennetworking.org/images/stories/downloads/sdn-resources/onf-specifications/openflow/openflow-switch-v1.5.1.pdf}
\BIBentrySTDinterwordspacing

\bibitem{VXLAN}
\BIBentryALTinterwordspacing
Virtual extensible local area network (vxlan). [Online]. Available:
  \url{https://tools.ietf.org/html/rfc7348}
\BIBentrySTDinterwordspacing

\bibitem{P4}
P.~Bosshart, D.~Daly, G.~Gibb, M.~Izzard, N.~McKeown, J.~Rexford,
  C.~Schlesinger, D.~Talayco, A.~Vahdat, G.~Varghese, and D.~Walker, ``P4:
  Programming protocol-independent packet processors,'' \emph{SIGCOMM Comput.
  Commun. Rev.}, Jul. 2014.

\bibitem{NetFPGA}
J.~W. Lockwood, N.~McKeown, G.~Watson, G.~Gibb, P.~Hartke, J.~Naous,
  R.~Raghuraman, and J.~Luo, ``Netfpga--an open platform for gigabit-rate
  network switching and routing,'' in \emph{2007 IEEE International Conference
  on Microelectronic Systems Education (MSE'07)}, Jun. 2007.

\bibitem{INET}
\BIBentryALTinterwordspacing
{INET} framework. [Online]. Available: \url{https://inet.omnetpp.org/}
\BIBentrySTDinterwordspacing

\end{thebibliography}

\end{document}